\renewcommand{\mathbf}[1]{\boldsymbol{#1}}
\newtheorem{theorem}{Theorem}[section]
\newtheorem{observation}[theorem]{Observation}
\newtheorem{lemma}[theorem]{Lemma}
\newtheorem{corollary}[theorem]{Corollary}
\theoremstyle{definition}
\newtheorem{definition}{Definition}[section]
\newtheorem{remark}{Remark}[section]
\newcommand{\abs}[1]{\ensuremath{\left|#1\right|}}
\newcommand{\norm}[2][]{\ensuremath{\Vert #2 \Vert_{#1}}}
\newcommand{\diff}[2]{\frac{\text{d}#1}{\text{d}#2}}
\newcommand{\pdiff}[2]{\frac{\partial #1}{\partial #2}}
\newcommand{\E}[2][]{\ensuremath{\mathbb{E}_{#1}\insq{#2}}}
\newcommand{\ina}[1]{\left<#1\right>}
\newcommand{\inb}[1]{\left\{#1\right\}}
\newcommand{\inp}[1]{\left(#1\right)}
\newcommand{\insq}[1]{\left[#1\right]}
\newcommand*{\defeq}{\mathrel{\rlap{%
                     \raisebox{0.3ex}{$\m@th\cdot$}}%
                     \raisebox{-0.3ex}{$\m@th\cdot$}}%
                    =}
\newcommand{\nfrac}[3][]{\nicefrac[#1]{#2}{#3}}
\newcommand{\Z}[0]{\ensuremath{\mathbb{Z}}}
\newcommand{\R}[0]{\ensuremath{\mathbb{R}}}
\newcommand{\poly}[1]{\ensuremath{\mathop{\mathrm{poly}}\inp{#1}}}
\renewcommand{\vec}[1]{\bm{#1}}
\newcommand{\Ts}[1]{\ensuremath{T_{SAW}\inp{#1}}}
\newcommand{\G}[0]{\mathcal{G}}
\newcommand{\br}[1]{\mathrm{br}\inp{#1}}
\newcommand{\family}[1]{{\mathcal{#1}}}
\begin{document}
\title{Spatial mixing and approximation algorithms\\for graphs with
  bounded connective constant}
\newcommand{\sincgrant}{NSF grant CCF-1016896} 
  \author{Alistair Sinclair \thanks{Computer Science Division, UC
      Berkeley. Email: \texttt{sinclair@cs.berkeley.edu}. Supported in
      part by \sincgrant~and by the Simons Institute for the Theory of
      Computing.}
    \and Piyush Srivastava\thanks{Computer Science Division, UC
      Berkeley. Email: \texttt{piyushsriva@gmail.com}. Supported in part
      by \sincgrant.}
    \and Yitong Yin\thanks{State Key Laboratory for Novel Software Technology, Nanjing University, China. Email:
      \texttt{yinyt@nju.edu.cn}. Supported by NSFC grants 61272081 and
      61021062. Part of this work was done while this author was
      visiting UC Berkeley.}}  

\date{}

\maketitle
\begin{abstract}
  The hard core model in statistical physics is a probability
  distribution on independent sets in a graph in which the weight of
  any independent set $I$ is  proportional to $\lambda^{|I|}$, where
  $\lambda > 0$ is the \emph{vertex activity}.  We show that there is an intimate connection between the \emph{connective constant} of a
  graph and the phenomenon of strong spatial mixing (decay of
  correlations) for the hard core model; specifically, we prove that the hard core model with vertex
  activity $\lambda < \lambda_c(\Delta + 1)$ exhibits strong spatial
  mixing on any graph of connective constant $\Delta$, irrespective of
  its maximum degree, and hence derive an FPTAS for the partition
  function of the hard core model on such graphs.  Here $\lambda_c(d)
  \defeq \frac{d^d}{(d-1)^{d+1}}$ is the critical activity for the
  uniqueness of the Gibbs measure of the hard core model on the
  infinite $d$-ary tree.  As an application, we show that the
  partition function can be efficiently approximated with high
  probability on graphs drawn from the random graph model
  $\G\inp{n,d/n}$ for all $\lambda < e/d$, even though the maximum degree
  of such graphs is unbounded with high probability.

  We also improve upon Weitz's bounds for strong spatial mixing on
  bounded degree graphs~\cite{Weitz06CountUptoThreshold} by providing
  a computationally simple method which uses
  known estimates of the connective constant of a lattice to obtain
  bounds on the vertex activities $\lambda$ for which the hard
  core model on the lattice exhibits strong spatial mixing.  Using
  this framework, we improve upon these bounds for several lattices
  including the Cartesian lattice in dimensions $3$ and higher.

  Our techniques also allow us to relate the threshold for the
  uniqueness of the Gibbs measure on a general tree to its
  \emph{branching factor}~\cite{lyons_ising_1989}.
  
\end{abstract}
\thispagestyle{empty}
\newpage
\setcounter{page}{1}
\section{Introduction}
\label{sec:introduction}
\subsection{Background}
\label{sec:background}
In spin systems on graphs, the property of {\it spatial mixing\/} (i.e., 
decay of correlations with distance) plays a central role.  In statistical
physics spatial mixing guarantees a unique Gibbs measure, and thus
a single phase for the underlying physical model.  In computer science
spatial mixing implies the existence of efficient algorithms for
approximating key combinatorial quantities.  Much attention has therefore
been focused on identifying ranges of parameters for which spatial mixing
holds; interestingly, this is a case where the computer science perspective,
motivated by algorithmic applications, has led to new insights into the
behavior of physical models.

In this paper we contribute to this line of work, focusing on the {\it hard core\/}
(or {\it weighted independent set\/}) model, which is one of the most widely
studied classical examples 
(though our techniques actually apply more
 widely to other two-spin systems, such as the anti-ferromagnetic Ising model).
The
configurations of the hard core model are
the independent sets of a graph $G=(V,E)$, each of which has a weight
$w(I)=\lambda^{|I|}$, where $\lambda$ is a positive parameter known as
the {\it vertex activity}.  The probability of occurrence of configuration~$I$
is determined by the {\it Gibbs distribution\/}: $$
   \pi(I) = w(I)/Z.  $$
Here the normalizing factor $Z=Z(G,\lambda) \defeq \sum_I w(I)$ is called
the {\it partition function}.  As a natural
generalization of counting, computing $Z$ is a central problem in
statistical physics and combinatorics, and is known to be \#P-hard in
most interesting cases.  Considerable interest has been devoted to the
problem of approximating $Z$, which is equivalent to
sampling from the Gibbs distribution~\cite{jervalvaz86}. 

In the special case where $G$ is the infinite $d$-ary tree, it is
well known that the hard core model exhibits a phase transition:
there exists a
critical activity $\lambda_c(d) = \frac{d^d}{(d-1)^{d+1}}$ such
that point-to-set
correlations in the Gibbs distribution\footnote{When $G$ is an
  infinite graph the partition function is not well defined.  However,
  the Gibbs distribution is still well-defined on any finite subset
  of~$G$, and can be extended in a natural way to a measure on the
  whole of~$G$.  On the infinite $d$-ary tree, this extension is
  uniquely determined only when $\lambda \leq \lambda_c(d)$.}
decay to zero exponentially with distance when $\lambda < \lambda_c(d)$ (this
decay is referred to as \emph{weak spatial mixing}), and remain
bounded away from zero when $\lambda > \lambda_c(d)$.  This phase transition
has been at the center of dramatic recent results relating
phase transitions to the computational
complexity of partition functions~\cite{Weitz06CountUptoThreshold,Sly2010CompTransition}.

A stronger notion of decay of correlations is that of \emph{strong
spatial mixing}.  Here, the exponential decay of point-to-set
correlations is required to hold even in the presence of arbitrary
boundary conditions; i.e.,  correlations should still decay even 
when the configurations of some vertices (possibly close to those
whose correlations are being measured) are fixed arbitrarily.  
Algorithmically, strong spatial mixing guarantees the existence of
an efficient approximation algorithm (an FPTAS) for the partition 
function~$Z$ and other important quantities associated with the model.
In a seminal
paper~\cite{Weitz06CountUptoThreshold}, Weitz showed that for the hard
core model, strong spatial mixing on the $d$-ary tree implies strong
spatial mixing on \emph{any} graph of degree at most $d+1$.  Weitz further showed
that weak spatial mixing is equivalent to strong spatial mixing
for the hard-core model on the $d$-ary tree, thus establishing that a
graph of maximum degree $d+1$ always exhibits strong spatial mixing,
and hence admits an FPTAS for the partition function,
for all $\lambda < \lambda_c(d)$.

At the time of publication of Weitz's paper, his bound for graphs of
maximum degree $d+1$ in terms of the
critical activity of the $d$-ary tree improved upon the best known bounds
for spatial mixing even for %
\ifbool{csub}{%
  widely studied special classes of graphs such as Cartesian
  lattices. 
}{%
  such special classes of graphs as
  Cartesian lattices, which are the most widely studied in statistical physics.
}
This remained the state-of-the-art until the recent work of
Restrepo, Shin, Tetali, Vigoda and Yang~\cite{restrepo11:_improv_mixin_condit_grid_count}, who
improved upon Weitz's bounds in the special case of the 2-dimensional Cartesian
lattice.  Although in principle the methods of Restrepo {\it et al.}\
can be applied to any fixed lattice, such an application
requires a numerical search over a high-dimensional parameter
space, and hence is computationally very intensive.  Further, in contrast to Weitz's bound, which depends
solely upon the maximum degree of the graph, the bounds of Restrepo
{\it et al.} (as well as those in the subsequent paper of Vera, Vigoda and Yang~\cite{vera13:_improv_bound_phase_trans_hard})  do not seem
to depend upon any easily identifiable characteristic of the graph.
Also, their methods are tailored to \emph{fixed}
lattices rather than to general classes of
graphs (such as graphs drawn from the random graph family $\G(n, d/n)$).   

\subsection{Contributions}
\label{sec:contributions}
In this work we improve upon Weitz's result for the hard core model in
two ways.  First, we relax the bounded degree restriction in Weitz's
result.  Second, in the case of bounded degree graphs we improve
Weitz's bounds by taking into account more information about the
structure of the graph.  We achieve these goals by relating spatial
mixing on a graph to its \emph{connective constant}, a natural measure of the
``effective degree'' in a sense that we describe below.  Since in many interesting families
of graphs the connective constant is significantly smaller than the
maximum degree, this will allow us to obtain tighter bounds.

For an infinite graph $G=(V,E)$, the connective constant $\Delta(G)$
is defined as 
\ifbool{csub}{%
  $\sup_{v\in V}\limsup_{\ell\rightarrow\infty}N(v,\ell)^{1/\ell},$
}{
  $$\sup_{v\in V}\limsup_{\ell\rightarrow\infty}N(v,\ell)^{1/\ell},$$%
} %
where $N(v,\ell)$ is the
number of self-avoiding walks of length~$\ell$ in~$G$ starting at~$v$.
This definition extends naturally to families of finite graphs (see
Section~\ref{sec:connective-constant}).  The set of self-avoiding walks originating at a vertex $v$ can naturally
be viewed as a tree rooted at $v$, known as the
\emph{self-avoiding walk (SAW) tree}.  The connective constant can
then be viewed as the \emph{average arity} of this tree.  Thus, for
example, the connective constant of a graph of maximum degree $d+1$ is
at most $d$, though it can be much smaller. The
connective constant is a well-studied quantity, especially for
standard graph families such as lattices, and rigorous bounds on its
value are known in many cases (see, for example,
\cite{alm_upper_2005,alm_upper_1993,poenitz00:_improv_z,jensen_enumeration_2004,madras96:_self_avoid_walk} and the
recent breakthrough in \cite{duminil-copin_connective_2012}).

Our interest in the connective constant comes from Weitz's
construction in \cite{Weitz06CountUptoThreshold} establishing that the
decay of correlations on a graph is always at least as rapid as on the
corresponding SAW tree.  Intuitively, the decay of correlations on this
tree should in turn be related to the rate of growth with $\ell$ of the number of vertices at
distance $\ell$ from the root (which is exactly $N(v,\ell)$). So far,
in the case of general graphs, this intuition has been captured only
by crudely bounding the growth of the number of vertices as $d^\ell$, where $d
+ 1$ is the maximum degree of the graph.  Our results show that by instead
using the connective constant, one can obtain tighter relations between
the rate of decay of correlations and the growth of the number of
descendants.   We say a few words about our proof techniques at the end of
this section.

Our first result is an analog of Weitz's bound, with the maximum degree
replaced by the connective constant and a slightly stronger condition
on the vertex activity $\lambda$.

\begin{theorem}
  \label{thm:main-1}
  Let $\lambda>0$ and $\Delta$ be such that $\lambda <
  \lambda_c(\Delta + 1)$.  %
Then:
  \begin{enumerate}
  \item The hard core model with vertex activity~$\lambda$
    exhibits strong spatial mixing on any family of (finite or
    infinite) graphs with connective constant at most~$\Delta$.
  \item There is an FPTAS for the
    partition function of the hard core model with vertex activity~$\lambda$
    on any family of finite graphs with connective constant at most~$\Delta$.
  \end{enumerate}
\end{theorem}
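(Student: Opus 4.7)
The plan is to adapt Weitz's self-avoiding walk (SAW) tree construction and to analyse the resulting tree-recursion in a way that extracts only the connective constant, not the full maximum degree. First, I would invoke Weitz's reduction: for any finite graph $G$, vertex $v$, and boundary condition on some subset $\Lambda \subseteq V(G)\setminus\inb{v}$, the marginal probability that $v \in I$ in $G$ equals the marginal probability of the root in the self-avoiding walk tree $\Ts{G, v}$ under a corresponding boundary. The advantage of passing to $\Ts{G, v}$ is that the number of its vertices at depth $\ell$ is exactly $N(v, \ell)$, which by hypothesis grows like $\Delta^\ell$ up to subexponential factors.

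On $\Ts{G, v}$, I would set up the standard hard-core recursion on ratios $R_v \defeq \Pr{v\in I}/\Pr{v\notin I}$:
\[
    R_u \;=\; F\inp{R_{v_1}, \ldots, R_{v_d}} \;=\; \lambda \prod_{i=1}^{d} \frac{1}{1 + R_{v_i}},
\]
for $u$ with children $v_1, \ldots, v_d$. To control the sensitivity of $R_{\mathrm{root}}$ to perturbations at the leaves, I would pass to a potential coordinate $y_v \defeq \psi\inp{R_v}$ for a suitable smooth, strictly increasing $\psi\colon\inp{0,\infty}\to\R$. In these coordinates the Jacobian of the local recursion becomes amenable to a sharp per-edge contraction analysis.

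The technical heart of the argument is to show that, whenever $\lambda < \lambda_c\inp{\Delta + 1}$, the chosen $\psi$ produces a uniform per-edge contraction constant $\kappa$ with $\Delta\cdot\kappa < 1$. Telescoping this bound along all $N(v, \ell)$ root-to-leaf paths of depth $\ell$ then yields
\[
    \abs{R_{\mathrm{root}}^{(1)} - R_{\mathrm{root}}^{(2)}} \;\lesssim\; \sum_\ell N(v,\ell) \cdot \kappa^\ell \;\leq\; \sum_\ell \inp{\inp{\Delta+o\inp{1}}\kappa}^\ell,
\]
which decays exponentially in the boundary depth and establishes strong spatial mixing on $G$ via the reduction. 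The FPTAS of the second conclusion is then immediate: a single-vertex marginal can be approximated to accuracy $\epsilon$ by running the tree recursion on the depth-$O\inp{\log\inp{n/\epsilon}}$ truncation of $\Ts{G, v}$, which is of polynomial size, and the self-reducibility argument of~\cite{jervalvaz86} converts this into an approximation of $Z$.

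The main obstacle is the design and verification of the potential $\psi$. Because the SAW tree may contain vertices of widely varying degree, a single closed-form potential must simultaneously control the contraction at every vertex type, while being tight enough that summation against the connective constant reproduces precisely the threshold $\lambda_c\inp{\Delta + 1}$. Earlier work of Restrepo et al.\ on particular lattices relied on multi-parameter numerical optimisation tuned to a specific graph; here the challenge is to exhibit and rigorously analyse a single $\psi$ whose per-edge contraction is uniform in the local degrees, which is the principal analytical difficulty of the proof.
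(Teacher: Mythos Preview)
Your high-level plan---Weitz's SAW tree reduction, a potential transform of the ratio recursion, telescoping a contraction estimate against the leaf counts $N(v,\ell)$, and truncation plus self-reducibility for the FPTAS---is exactly the architecture of the paper's proof. Where the proposal stops short is the contraction step, and this is where the real content lies.

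You frame the target as a \emph{uniform per-edge contraction} $\kappa$ with $\Delta\kappa<1$. Taken literally, this asks for a potential under which every partial derivative of the recursion map is at most $\kappa$, uniformly in the local degree and in the configuration $\vec{z}$. But at low-degree vertices this is too much to ask: with the paper's potential $\phi(x)=\sinh^{-1}\sqrt{x}$ (and likewise for the other natural choices), the edge factor at a degree-one vertex is $\Theta(\lambda)$ for small $\lambda$, so at the target threshold $\lambda\sim e/\Delta$ it is already $\sim e/\Delta>1/\Delta$. The paper therefore does \emph{not} bound individual edges. Instead it tracks the \emph{squared $\ell_2$ norm} of the error vector: H\"older's inequality with $p=q=2$, combined with Jensen's inequality applied via the concavity of $S_{\phi,2}(x)=4(1-e^{-x})$, reduces the asymmetric step at a degree-$d$ vertex to the symmetric one and yields
\[
\bigl|f_d^\phi(\vec{x})-f_d^\phi(\vec{y})\bigr|^2\;\le\;\nu_\lambda(d)\sum_{i=1}^d|x_i-y_i|^2.
\]
Telescoping the \emph{squared} errors then produces $\sum_{v\in C_\ell}\alpha^{|v|}$ with $\alpha=\sup_{d\ge 1}\nu_\lambda(d)$, and the final calculation verifies $\alpha\Delta<1$ precisely when $\lambda<\lambda_c(\Delta+1)$. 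This $\ell_2$ aggregation---rather than an $\ell_\infty$-gradient or per-edge bound---is the paper's main technical innovation, and it is exactly the obstacle your last paragraph names without resolving.
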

\begin{remark}
  The condition on $\lambda$ in the theorem is satisfied whenever
  $\lambda < \frac{e}{\Delta}$.  This is asymptotically optimal for
  large~$\Delta$ since $\lambda_c(\Delta)\sim e/\Delta$ as
  $\Delta\to\infty$. (We note that a trivial path coupling argument can
  be used to prove strong spatial mixing under the much stronger
  assumption $\lambda < 1/\Delta$.)
\end{remark}
\begin{remark}
If we could replace $\lambda_c(\Delta+1)$ in the above theorem by
$\lambda_c(\Delta)$, this would exactly parallel Weitz's bound with the
maximum degree replaced by the connective constant and would be optimal 
for all~$\Delta$.  Although we do not know if such a result holds in
general, we do obtain such a tight bound for the special case of
spherically symmetric trees; see %
\ifbool{csub}{%
  the full version~\cite{sinclair13:_spatial} for details. 
}{%
  Section~\ref{sec:more-results-trees}.
}
\end{remark}

The result in Theorem~\ref {thm:main-1} has the advantage
of being applicable without any bound on the maximum degree.
This is in contrast to various recent
results on the approximation of the hard core model partition
function~\cite{restrepo11:_improv_mixin_condit_grid_count,Weitz06CountUptoThreshold,li_approximate_2012,li_correlation_2011},
all of which require the maximum degree to be bounded
by a constant.  As an application of Theorem~\ref{thm:main-1}
we consider the random graph model $\G(n,d/n)$, which has 
constant average degree~$d(1-o(1))$ but unbounded maximum degree
$\Theta(\log n/\log\log n)$ with high probability.  We prove the
following result (see Section~\ref{sec:special-message} for a more
precise formulation of the high probability statements).
\begin{theorem}\label{thm:gndn-ssm-alg}
  Let $\epsilon > 0$, $d > 1$ and $\lambda <
  \frac{e}{d(1+\epsilon)}$ be fixed. Then:
  \begin{itemize}
  \item The hard core model with activity $\lambda$ on a graph $G$
    drawn from $\G(n,d/n)$ exhibits strong spatial mixing with high
    probability.
  \item There exists a deterministic algorithm which on input $\mu >
    0$, approximates the partition function of the hard core model
    with activity $\lambda > 0$ on a graph $G$ drawn from $\G(n, d/n)$
    within a multiplicative factor of $(1 \pm \mu)$, and which runs in
    time polynomial in $n$ and $1/\mu$ with high probability (over the
    random choice of the graph).
  \end{itemize}
\end{theorem}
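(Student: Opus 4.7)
The plan is to deduce both parts of Theorem~\ref{thm:gndn-ssm-alg} from Theorem~\ref{thm:main-1} by showing that a graph $G$ drawn from $\G(n,d/n)$ satisfies a finite-graph version of ``connective constant at most $\Delta$'' with high probability, for some $\Delta$ strictly between $d$ and $d(1+\epsilon)$.  The concrete target is a uniform bound of the form
\begin{equation*}
  N(v,\ell) \;\leq\; C(n)\cdot \Delta^\ell \qquad \text{for every vertex } v \text{ and every } \ell\geq 1,
\end{equation*}
where $N(v,\ell)$ is the number of self-avoiding walks of length $\ell$ starting at $v$ and $C(n)$ is at most polynomial in $n$.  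Since $\lambda_c(\Delta+1)\cdot \Delta = (1+1/\Delta)^{\Delta+1}>e$ for every $\Delta>0$, the hypothesis $\lambda<e/(d(1+\epsilon))$ lets us pick $\Delta\in(d,d(1+\epsilon))$ (for concreteness $\Delta=d(1+\epsilon/2)$) with $\lambda<\lambda_c(\Delta+1)$, and Theorem~\ref{thm:main-1} will then supply both strong spatial mixing and the FPTAS.

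The central probabilistic ingredient is a first moment estimate.  A fixed ordered tuple $(v=v_0,v_1,\dots,v_\ell)$ of distinct vertices is a SAW in $G\sim\G(n,d/n)$ exactly when each of the $\ell$ potential edges $\{v_{i-1},v_i\}$ is present, which has probability $(d/n)^\ell$.  Summing over the at most $n^\ell$ such tuples gives $\E{N(v,\ell)}\leq d^\ell$.  Markov's inequality, combined with a union bound over the $n$ vertices $v$ and over all lengths $1\leq \ell\leq L$ for any polynomially bounded cutoff $L=L(n)$, then yields $N(v,\ell)\leq n^3 d^\ell$ simultaneously for all such $v$ and $\ell$ with probability $1-o(1)$.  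Since $n^3 d^\ell\leq \Delta^\ell$ as soon as $\ell\geq (3\log n)/\log(\Delta/d) = \Theta(\log n)$, the SAW growth is bounded at the desired exponential rate $\Delta$ with a subexponential prefactor that should be absorbed by the finite-graph definition of connective constant.

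Once this growth bound is in place, the first bullet of the theorem follows immediately from the spatial mixing half of Theorem~\ref{thm:main-1}.  For the algorithmic bullet, we need to verify that Weitz's SAW-tree approximation scheme runs in polynomial time despite the unbounded maximum degree $\Theta(\log n/\log\log n)$ of $G$: the SAW-tree depth needed to reach multiplicative error $\mu$ in the partition function is $O(\log(n/\mu))$, and its total size, controlled by $\sum_{\ell\leq O(\log(n/\mu))} N(v,\ell)$, is $n^{O(1)}(1/\mu)^{O(1)}$ by the bound above.

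The main obstacle I anticipate is formulating the finite-graph notion of connective constant (deferred to Section~\ref{sec:connective-constant}) cleanly enough that the subexponential prefactor $n^3$ in our estimate does not erode the threshold condition $\lambda<\lambda_c(\Delta+1)$.  A related subtlety is that for very small $\ell$---most visibly $\ell=1$, where $N(v,1)$ is just the degree of $v$ and can be as large as $\Theta(\log n/\log\log n)$---no pointwise bound of the form $N(v,\ell)\leq \Delta^\ell$ can possibly hold; the definition must control only the long-range growth, which is precisely why the ``with high probability'' qualifier in the theorem statement is needed.  Both issues should be resolved by the right choice of finite-graph definition, after which the first-moment argument above together with the running-time analysis of Weitz's algorithm completes the proof.
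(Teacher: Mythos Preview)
Your proposal is correct and essentially identical to the paper's proof: the paper also chooses $\Delta=d(1+\epsilon/2)$, bounds $\E{\sum_{i\le\ell}N(v,i)}\le\frac{d}{d-1}d^\ell$, applies Markov's inequality with a slack factor of $(1+\epsilon/2)^\ell$ (rather than your $n^3$) followed by a union bound over vertices and over $a\log n\le\ell\le n$, and then invokes Theorem~\ref{thm:main-1}. The concern you flag about small $\ell$ is handled exactly as you anticipate---the paper's finite-graph definition of connective constant only requires $\sum_{i\le\ell}N(v,i)\le c\Delta^\ell$ for $\ell\ge a\log n$---so your subexponential prefactor is absorbed without loss.
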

%
%

Similar results for $\G(n,d/n)$ have appeared in the literature in the
context of rapid mixing of Glauber dynamics for the ferromagnetic Ising
model~\cite{mossel_exact_2013}, and also for the hard core
model~\cite{mossel_gibbs_2010,efthymiou13:_mcmc_g}.  Although the authors
of~\cite{mossel_gibbs_2010} do not supply an explicit range
of~$\lambda$ for which their rapid mixing results hold, an examination of their
proofs suggests that necessarily $\lambda<O\inp{1/d^2}$.  Similarly, the results of \cite{efthymiou13:_mcmc_g} hold when $\lambda < 1/(2d)$.  In contrast, our bound approaches the conjectured optimal value
$e/d$. Further, unlike ours, the results of~\cite{mossel_gibbs_2010,
  mossel_exact_2013,efthymiou13:_mcmc_g} are restricted to $\G(n, d/n)$ and
certain other classes of sparse graphs.

While it is asymptotically optimal, Theorem~\ref{thm:main-1} above 
is not strong enough to improve upon Weitz's result in the important
special case of small degree lattices.  To do this, we adapt our techniques
to take into account the maximum degree as well as the connective constant.
Define the function $\nu_\lambda(d)$ by
\begin{equation*}
  \nu_\lambda(d) \defeq \frac{d\tilde{x}_\lambda(d) - 1}{1 + \tilde{x}_\lambda(d)},
\end{equation*}
where $\tilde{x}_\lambda(d)$ is the unique positive solution of the
fixed point equation $dx = 1 + \lambda/(1+x)^d$. We can now state our second general theorem.
\begin{theorem}
  \label{thm:main-2}
  Let $d$ be a positive integer, and let $\lambda$ and $\Delta$ be
  such that $\nu_\lambda(d)\Delta < 1$.  Then:
  \begin{enumerate}
  \item The hard-core model with vertex activity $\lambda' \leq
    \lambda$ exhibits strong spatial mixing on any family of (finite or infinite) graphs with
    maximum degree at most $d + 1$ and connective constant at most~$\Delta$.
  \item For any $\lambda' \leq \lambda$, there is an FPTAS for the
    partition function of the hard-core model with vertex activity
    $\lambda' $ on any family of finite graphs with maximum degree at
    most $d+1$ and connective constant at most~$\Delta$.
  \end{enumerate}
\end{theorem}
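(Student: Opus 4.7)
The plan is to combine Weitz's self-avoiding walk tree reduction from \cite{Weitz06CountUptoThreshold} with a refined contraction analysis that simultaneously exploits the maximum degree $d+1$ and the connective constant~$\Delta$. For any vertex $v$ of a graph $G$ with maximum degree $d+1$, Weitz's construction shows that the marginal of the hard core Gibbs distribution at $v$ (under any boundary condition) equals the root marginal of the hard core model on the self-avoiding walk tree \Ts{v}, with suitable induced boundary conditions. Strong spatial mixing on $G$ therefore reduces to exponential decay of point-to-set correlations at the root of \Ts{v}, which I would analyze via the standard tree recursion on the ratio $R_u = \Pr{u\in I}/\Pr{u\notin I}$, namely $R_u = \lambda\prod_i(1+R_{u_i})^{-1}$.

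To control this recursion, the plan is to change coordinates by $y_u = \Phi(R_u)$ for a well-chosen potential function~$\Phi$, making the induced map \emph{amortizedly} contracting at each internal vertex. Since \Ts{v} is built from self-avoiding walks in a max-degree-$(d{+}1)$ graph, every non-root vertex has at most $d$ children. A symmetrization argument in the style of \cite{restrepo11:_improv_mixin_condit_grid_count,li_correlation_2011} should then show that the supremum of the one-step contraction factor over all admissible child configurations is attained at the symmetric fixed point of the degree-$d$ map, and that for the right $\Phi$ this supremum equals precisely $\nu_\lambda(d)$; here $\tilde{x}_\lambda(d)$ from the equation $dx=1+\lambda/(1+x)^d$ is exactly the symmetric fixed point in the transformed coordinate. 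Consequently, the sensitivity of the root value to a perturbation at a leaf at depth~$\ell$ is bounded by $C\cdot\nu_\lambda(d)^{\ell}$, where $C$ absorbs a constant factor coming from the root's extra $(d{+}1)$-st child.

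To pass from per-path decay to decay of correlations at the root, I sum contributions over all leaves at depth~$\ell$. This is exactly where the connective constant enters: the number of vertices at depth~$\ell$ in \Ts{v} equals the number of self-avoiding walks $N(v,\ell)$, and the hypothesis on the connective constant yields, for any $\eta>0$, a bound $N(v,\ell)\leq C_\eta(\Delta+\eta)^\ell$ uniformly over the graph family (as encoded in the definition in Section~\ref{sec:connective-constant}). Choosing $\eta$ small enough that $\nu_\lambda(d)(\Delta+\eta)<1$, the total sensitivity of the root to any boundary at distance~$\ell$ is bounded by a summable geometric series, yielding exponential decay of point-to-set correlations and hence strong spatial mixing. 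The FPTAS then follows by the standard pipeline from \cite{Weitz06CountUptoThreshold}: evaluate the root marginal on \Ts{v} truncated to depth $\ell=O(\log(n/\mu))$ by dynamic programming, and telescope marginals over sequential vertex deletions to approximate $Z(G,\lambda)$; strong spatial mixing bounds the truncation error, and the running time is $\poly{n,1/\mu}$ since each truncated tree can be evaluated in polynomial time.

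The hard part is the contraction step itself: one must exhibit a potential $\Phi$ such that the worst-case one-step rate over \emph{all} admissible child configurations at an internal SAW-tree vertex is bounded by $\nu_\lambda(d)$, the rate at the symmetric degree-$d$ fixed point. Ruling out asymmetric or lower-arity configurations that might give a larger contraction factor requires a careful Lagrangian or convexity analysis of the log-derivative of the recursion, and constitutes the main technical content beyond Theorem~\ref{thm:main-1}. A secondary but nontrivial issue is the uniformity of the $N(v,\ell)$ bound over~$v$ and over the graph family, which must be built into the formal definition of connective constant adopted in Section~\ref{sec:connective-constant}.
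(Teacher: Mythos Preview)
Your high-level architecture (Weitz's SAW-tree reduction, a potential/message transformation, a contraction argument, and then invoking the connective constant to bound the number of leaves) is exactly right, and so is the FPTAS pipeline. But there is a genuine gap at the heart of the contraction step, and it is precisely the point where the paper's argument departs from the \cite{restrepo11:_improv_mixin_condit_grid_count,li_correlation_2011} template you invoke.

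The quantity $\nu_\lambda(d)$ is \emph{not} a one-step contraction factor in the $\ell_\infty$ (max-error) sense of \cite{restrepo11:_improv_mixin_condit_grid_count,li_correlation_2011}, nor is it a per-path derivative bound. With the message $\phi(x)=\sinh^{-1}\sqrt{x}$ and arity $d$, the $\ell_\infty$ contraction rate at the symmetric fixed point is $\sqrt{d\,\nu_\lambda(d)}$, and the uniform bound on a \emph{single} partial derivative $|\partial_i f^\phi|$ over all configurations is $\sqrt{\nu_\lambda(1)}$. If you carry out your plan literally---bound the sensitivity along each root-to-leaf path by a product of single partial derivatives, then sum these contributions over the $N(v,\ell)$ leaves---you obtain decay only under a condition of the form $\Delta^2\cdot\nu_\lambda(1)<1$ (or similar), which is strictly stronger than the theorem's hypothesis $\nu_\lambda(d)\cdot\Delta<1$. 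The ``per-path decay $\nu_\lambda(d)^\ell$'' assertion in your proposal is therefore not justified by the symmetrization arguments you cite.

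What the paper actually does is track an $\ell_q$ norm (with $q=2$) of the error vector across a cutset: Lemma~\ref{lem:tech} shows, via a concavity condition on $S_{\phi,2}(x)=4(1-e^{-x})$ and Jensen's inequality, that
\[
\bigl|f_d^\phi(\vec{x})-f_d^\phi(\vec{y})\bigr|^2 \;\le\; \nu_\lambda(d)\,\|\vec{x}-\vec{y}\|_2^2.
\]
Unfolding this along the tree (Lemma~\ref{lem:general-tree}) gives $|\text{error at root}|^2 \le c\,\nu_\lambda(d)^\ell\sum_{u\in C_\ell}1 = c\,\nu_\lambda(d)^\ell\,N(v,\ell)$, and only then does the connective constant enter to bound $N(v,\ell)\le c'\Delta^\ell$, yielding decay precisely when $\nu_\lambda(d)\Delta<1$. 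The paper explicitly flags this move from $\ell_\infty$ to $\ell_q$ as its main technical innovation over the prior work you reference; without it, the connective constant and the per-step rate do not combine with the correct exponents.
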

\begin{remark}
  The connective constant $\Delta$ of any infinite
  graph of maximum degree $d+1$ can be at most $d$.  This, combined with the
  fact (proved later) that $\nu_\lambda(d)d < 1$ whenever
  $\lambda < \lambda_c(d)$, ensures that the bound for spatial mixing in
  Theorem~\ref {thm:main-2} is always better than that obtained
  from Weitz's result.  Thus, the above theorem extends the
  applicability of Weitz's results to a
  larger range of vertex activities.
\end{remark}

Using Theorem~\ref {thm:main-2}, we are able to improve
upon the best known spatial mixing bounds for various lattices,
including the Cartesian lattice in three and higher dimensions, 
as shown in Table~\ref {fig:1}.  The table shows, for each lattice,
the best known upper bound for the connective constant and
the strong spatial mixing (SSM) bounds we obtain using 
these values in Theorem~\ref{thm:main-2}.  In the table, a value~$\alpha$
in the ``$\Delta$" column means that SSM is shown to hold for the
appropriate lattice whenever $\lambda \leq\lambda_c(\alpha)$; the
corresponding value~$\lambda_c(\alpha)$ (rounded to three decimal places) appears in the adjacent ``$\lambda$" column.
The last pair of columns give the previously best known SSM bounds.

As is evident from the table, our general result gives improvements on 
known SSM bounds for all lattices except the 2-dimensional Cartesian 
lattice~$\mathbb{Z}^2$.  For that lattice, our bound still improves on 
Weitz's bound but is not as strong as the bound obtained 
in~\cite{restrepo11:_improv_mixin_condit_grid_count}
by more numerically intensive methods tailored to this special case.
We note, however, that any improvement in the bound on the connective
constant of the corresponding SAW tree immediately yields an improvement in our SSM
bound; %
\ifbool{csub}{%
  for an example of such an improvement, and for the details of the derivation
  of the results in the table, see the full
  version~\cite{sinclair13:_spatial} of this paper. 
}{%
  we illustrate this with a simple example in
  Appendix~\ref{sec:descr-numer-results}, where we also give details of the
  derivation of the results in the table.
}

\ifbool{csub}{%
  \begin{table*}[t]
}{%
  \begin{table}[h]
}
  \centering
  \begin{tabular}[h]{l
      S[table-format=2.6]
      c
      S[table-format=5]
      S[table-format=2.3]
      S[table-format=2.6]
      S[table-format=2.0]
      S[table-format=0.4]
    }
    \toprule
    &&&&\multicolumn{2}{c}{Our SSM bound}&\multicolumn{2}{c}{Previous
      best SSM bound}\\
    \cmidrule(r){5-6} \cmidrule(r){7-8}
    Lattice &\multicolumn{2}{c}{Conn.~constant}
    & {Max. degree} 
    & \multicolumn{1}{r}{$\Delta$} & {$\lambda$}
    & {$\Delta$} & {$\lambda$}  \\

    \midrule

    $\mathbb{T}$ & 4.251 419&\cite{alm_upper_2005}
    & 6 & 4.325 & 0.937 &
    5 & 0.762 \cite{Weitz06CountUptoThreshold}\\

    $\mathbb{H}$ & 1.847 760&\cite{duminil-copin_connective_2012}
    & 3 & 1.884 & 4.706 &
    2 & 4.0 \cite{Weitz06CountUptoThreshold}\\

    $\Z^2$ & 2.679 193&\cite{poenitz00:_improv_z}
    & 4 & 2.731 & 2.007 &
    2.502 &
    2.48 \cite{restrepo11:_improv_mixin_condit_grid_count,vera13:_improv_bound_phase_trans_hard}\\

    $\Z^3$ & 4.7387&\cite{poenitz00:_improv_z}
    & 6 & 4.765 & 0.816 &
    5&0.762 \cite{Weitz06CountUptoThreshold}\\

    $\Z^4$  & 6.804 0&\cite{poenitz00:_improv_z}
    & 8 & 6.818 &  0.506 &
    7 & 0.490 \cite{Weitz06CountUptoThreshold}\\

    $\Z^5$ & 8.860 2&\cite{poenitz00:_improv_z}
    & 10 & 8.868 & 0.367 &
    9 & 0.360 \cite{Weitz06CountUptoThreshold}\\

    $\Z^6$ & 10.888 6&\cite{weisstein:_self_avoid_walk_connec_const}
    & 12 & 10.894 & 0.288 &
    11 &   0.285 \cite{Weitz06CountUptoThreshold}\\

    \bottomrule
  \end{tabular}
  \caption{Strong spatial mixing bounds for various lattices.  ($Z^D$
    is the $D$-dimensional Cartesian lattice; $\mathbb{T}$ and $\mathbb{H}$
    denote the triangular and honeycomb lattices respectively.)} 
\ifbool{csub}{%
  \vspace{-2.5\baselineskip}%
}{%
}
  \label{fig:1}
\ifbool{csub}{%
   \end{table*}
}{%
   \end{table}
}

Finally, we apply our techniques to get tighter results for
correlation decay in some other important cases.  The first of these
is strong spatial mixing on \emph{spherically symmetric
trees} (rooted trees in which the degree of each vertex is dependent
only upon its distance from the root).  Such trees have been studied before,
for example by Lyons~\cite{lyons_random_1990}.  For these trees, we
improve upon Theorem~\ref{thm:main-1} to show that strong spatial
mixing holds as long as $\lambda < \lambda_c(\Delta)$,
which is optimal as a function of $\Delta$.  As an application,
we present an FPTAS for the partition function of the hard core model
on bounded degree bipartite graphs that improves upon the parameters
promised by Weitz's algorithm~\cite{Weitz06CountUptoThreshold}; see
\ifbool{csub}{%
  the full version~\cite{sinclair13:_spatial} for details. 
}{%
  Section~\ref{sec:spher-symm-trees} for details.
}%
We also consider the question of uniqueness of Gibbs measure on
general trees.  Uniqueness is a weaker notion than spatial mixing,
requiring correlations to decay to zero with distance but
not necessarily at an exponential rate (see
\ifbool{csub}{%
  the full version for a formal
  definition).  %
}{%
  Section~\ref{sec:branch-numb-uniq} for a formal definition).  %
}  %
We show
that the threshold for uniqueness of the Gibbs measure of the hard
core model on a general tree can be related to the \emph{branching
factor} of the tree, another natural notion of average arity that has
appeared in the study of uniqueness of Gibbs measure for models such
as the ferromagnetic Ising model~\cite{lyons_ising_1989} and is
slightly stronger than the connective constant.
The details of these
results can be found in 
\ifbool{csub}{%
  the full version~\cite{sinclair13:_spatial} of this paper.  %
}{%
  Section~\ref{sec:branch-numb-uniq}.  %
} %

We close this section with a few remarks on our proof techniques,
which begin with the message approach first used by Restrepo~\emph{et
al.}~\cite{restrepo11:_improv_mixin_condit_grid_count}.  The key
idea here is to consider a function, called a
\emph{message}, of the occupation probabilities, and to study how
``errors'' in this message propagate through the standard tree
recurrence.  All previous applications of this
method~\cite{restrepo11:_improv_mixin_condit_grid_count,
sinclair_approximation_2012, li_approximate_2012,
li_correlation_2011} establish correlation decay on the tree by
showing that, for an appropriate range of parameters and for an
appropriately defined message, the error at every vertex in the
tree is at most some
constant fraction of the \emph{maximum} error at its children.
This is an inherently worst case analysis that is oblivious to the
more detailed structure of the tree.  At a very high level, our
main technical innovation gets around this issue by considering
the decay of a smoother statistic (an $\ell_q$ norm for $q < \infty$)
of the errors at the children rather than 
the maximum error.  We present a general, message-independent
framework for achieving this in Section~\ref{sec:messages-tree},
and then in Section~\ref{sec:special-message}
we instantiate the framework for a specific message (previously used
in \cite{li_correlation_2011})
to obtain the proofs of our main results.  We note that our message
framework extends in a straightforward manner to other two-spin
systems such as the anti-ferromagnetic Ising model; this extension is
deferred to a future paper.

\subsection{Related work}
\label{sec:related-work}
Luby and Vigoda~\cite{luby_approximately_1997} were the first to give
a general approximation algorithm (an FPRAS) for the hard core model
partition function valid for $\lambda < \frac{1}{d-3}$ on graphs of
maximum degree $d$.  In a breakthrough result, Weitz~\cite{Weitz06CountUptoThreshold} proved that for
the hard core model, uniqueness of the Gibbs measure on the infinite
$d$-ary tree implies that all graphs of degree at most $d+1$ exhibit
strong spatial mixing, and further that there is an FPTAS for the
partition function of the hard core model on such graphs.  Weitz's
results have since been extended to the anti-ferromagnetic Ising model
with an arbitrary external field~\cite{sinclair_approximation_2012},
and later to general anti-ferromagnetic two-spin
systems~\cite{li_correlation_2011}.  On the other hand, in the special
case of the hard core model on the two dimensional Cartesian lattice
$\Z^2$, Restrepo, Shin, Tetali, Vigoda and Yang~\cite{restrepo11:_improv_mixin_condit_grid_count} improved upon
the threshold for strong spatial mixing obtained using a direct
application of Weitz's result.  Their numerical bound for $\Z^2$ has
recently been improved by Vera, Vigoda and
Yang~\cite{vera13:_improv_bound_phase_trans_hard}, using methods that
are similar in spirit to those in~\cite{restrepo11:_improv_mixin_condit_grid_count}, but which require
even more intensive and sophisticated numerical computations.
As observed earlier, although the
methods of Restrepo \emph{et
  al.}~\cite{restrepo11:_improv_mixin_condit_grid_count} (as well
those of Vera \emph{et al.}~\cite{vera13:_improv_bound_phase_trans_hard})
can in principle be applied to any given lattice, in practice such an
application would require a significant amount of computer-assisted
numeric and symbolic computations, especially for higher dimensional
lattices.  Further, unlike ours, their methods do not seem
to generalize to arbitrary graph families (such as the
random graph family $\G\inp{n,d/n}$).

All of the results above pertain to the bounded degree case.  Li, Lu
and Yin~\cite{li_approximate_2012,li_correlation_2011} extended some
of the above results for two-spin systems to the setting of general graphs without any
bound on the maximum degree, but only in the
regime of parameters for which uniqueness of the Gibbs measure holds on
the $d$-ary tree \emph{for all} $d$.  In the case of the
hard core model, however, the latter condition fails
to hold for any non-trivial vertex activity.  Our results, in
contrast, impose a condition on the activity only in terms of a
natural notion of effective degree.  Further, this condition is
asymptotically optimal as observed earlier.

Mossel and Sly~\cite{mossel_gibbs_2010} consider the question of
sampling from the hard core distribution on unbounded degree graphs
such as ${\cal G}(n, d/n)$ using Glauber dynamics.  Under some
conditions, their results also extend to other families of sparse
graphs with bounded ``tree excess.''  Although the authors do not give
an explicit bound on the vertex activity $\lambda$ under which their
results hold, an examination of the proofs suggests that necessarily
$\lambda < O\inp{1/d^2}$.  Efthymiou~\cite{efthymiou13:_mcmc_g} has recently improved
upon~\cite{mossel_gibbs_2010} by exhibiting rapid mixing for a Gibbs
sampler under the condition $\lambda < 1/(2d)$.  However, both these
bounds still remain far from our bound of $e/d$. Further,
as noted above, the above results do not seem to hold for
all graphs of bounded connective constant.  Hayes and
Vigoda~\cite{hayes_coupling_2006} also considered the question of
sampling from the hard core model on special classes of unbounded
degree graphs.  They showed that for regular graphs on $n$ vertices of
degree $d(n) = \Omega(\log n)$ and of girth greater than $6$, the
Glauber dynamics for the hard core model mixes rapidly for $\lambda <
(1-\epsilon)e/d(n)$ (where $\epsilon$ is an arbitrary positive
constant).  Their results are incomparable to ours; while our
Theorem~\ref{thm:main-1} requires neither the condition that the graph
should be regular nor any lower bounds on its degree or girth, it does
require additional information about the graph in the form of its
connective constant.  However, when the connective constant is
available, then irrespective of the maximum degree of the graph or its
girth, the theorem affords an FPTAS for the partition function.

Much more progress has been made on relating spatial mixing to
notions of average degree in the case of the zero field \emph{ferromagnetic}
Ising model.  Lyons~\cite{lyons_ising_1989} demonstrated that on an
arbitrary tree, the \emph{branching factor} exactly determines the threshold
for uniqueness of the Gibbs measure for this model. For the ferromagnetic Ising model on general graphs, Mossel and
Sly~\cite{mossel_rapid_2009,mossel_exact_2013} proved results
analogous to our Theorem~\ref{thm:main-1}.  However, the arguments of both \cite{lyons_ising_1989}
and \cite{mossel_rapid_2009,mossel_exact_2013} seem to
rely heavily on special properties of the ferromagnetic Ising model,
and do not appear to be easily extensible to the case of ``repulsive''
spin systems such as the hard core model.  Further, an FPRAS for the
partition function of the ferromagnetic Ising model, without any
restrictions on the degree, is already known~\cite{jersin93,goljerpat03}.

In work related to \cite{lyons_ising_1989} above, Pemantle and Steif~\cite{pemantle_robust_1999} define
the notion of a \emph{robust phase transition (RPT)} and relate
the threshold for RPT for various ``symmetric'' models such as the
zero field Potts model and the Heisenberg model on general trees to
the branching factor of the tree.  In the results of both
\cite{lyons_ising_1989} and \cite{pemantle_robust_1999}, an
important ingredient seems to be the existence of a symmetry group on
the set of spins under whose action the underlying measure remains
invariant.  In contrast, in the hard core model, the two possible spin
states of a vertex (``occupied'' and ``unoccupied'') do not admit any
such symmetry.

The first
reference to the connective constant occurs in classical papers by Hammersley
and Morton~\cite{hammersley_poor_1954}, Hammersley and Broadbent
\cite{broadbent_percolation_1957} and
Hammersley~\cite{hammersley_percolation_1957}.  Since then, several
natural combinatorial questions concerning the number and other
properties of self-avoiding walks in various lattices have been
studied in depth; see the monograph of Madras and
Slade~\cite{madras96:_self_avoid_walk} for a survey.  %
Much work has been devoted especially to finding
rigorous upper and lower bounds for the connective constant of
various lattices~\cite{alm_upper_2005, alm_upper_1993,
  jensen_enumeration_2004, kesten_number_1964, poenitz00:_improv_z}.
Heuristic techniques from physics have also been
brought to bear upon this question. For example, Nienhuis~\cite{nienhuis_exact_1982}
conjectured on the basis of heuristic arguments that the
connective constant of the honeycomb lattice $\mathbb{H}$
must be $\sqrt{2+\sqrt{2}}$.  In a celebrated recent breakthrough, Duminil-Copin
and Smirnov~\cite{duminil-copin_connective_2012} rigorously proved
Nienhuis' conjecture.

\section{Preliminaries}
\label{sec:Preliminaries}

\subsection{The hard core model on trees and graphs}
\label{sec:tree-recurrence}
In this section, we introduce some standard notions
associated with the hard core model on trees and
graphs.  Our notation is similar to that used in other recent works on
the subject~\cite{Weitz06CountUptoThreshold,
  restrepo11:_improv_mixin_condit_grid_count,
  sinclair_approximation_2012,li_approximate_2012,li_correlation_2011}.

Given a graph $G = (V,E)$, a \emph{boundary condition} will refer to a
partially specified independent set in $G$.  Formally, a boundary
condition $\sigma = (S, I)$ is a subset $S \subseteq V$ along with an
independent set $I$ on $S$.

\begin{definition}
  \textbf{(Occupation probability and occupation ratio).} Consider the
  hard core model with vertex activity $\lambda > 0$ on a graph $G$,
  and let $v$ be a vertex in $G$.  Given a boundary condition $\sigma = (S,
  I_S)$ on $G$, the \emph{occupation probability} $p_v(\sigma, G)$ at
  the vertex $v$ is the probability that $v$ is included in an independent
  set $I$ sampled according to the hard core distribution conditioned
  on the event that $I$ restricted to $S$ coincides with $I_S$.  The
  \emph{occupation ratio} $R_v(\sigma,G)$ is then defined as
  \begin{displaymath}
    R_v(\sigma,G) = \frac{p_v(\sigma,G)}{1-p_v(\sigma,G)}.
  \end{displaymath}
\end{definition}

In the special case where the graph $G$ is a tree, both the occupation
probability and the occupation ratio admit a simple recurrence in
terms of similar quantities on subtrees.  Formally, let $T$ be an
arbitrary tree rooted at a vertex $\rho$.  Denote the children of
$\rho$ as $\rho_1, \rho_2, \ldots \rho_d$, and let $T_{\rho_i}$ denote
the subtree of $T$ rooted at $\rho_i$.  Let $\sigma$ be an arbitrary
boundary condition on $T$, and let $\sigma_i$ be its restriction to
$T_{\rho_i}$.  We denote by  $R_i$ the occupation ratio $R_{\rho_i}(\sigma_i, T_{\rho_i})$ of the root
$\rho_i$ of the tree $T_{\rho_i}$.  It is well known---and easy to show---that the $R_i$ obey
the following recurrence~(see, for example,
\cite{Weitz06CountUptoThreshold}):
\begin{equation}
  R_{\rho}(\sigma, T) = f_{d,\lambda}(R_1, R_2, \ldots, R_d) \defeq     
  \lambda\prod_{i=1}^d\frac{1}{1+R_i}.\label{eq:2}
\end{equation}
With a slight abuse of notation, we also use the same notation to
denote a symmetric one-argument version of $f_{d,\lambda}$ defined as
$f_{d,\lambda}(x) \defeq f_{d,\lambda}(x, x, \ldots, x).$

In order to analyze the convergence properties of the above
recurrence, it is often convenient to consider the evolution of a
suitable function of the occupation
ratio, called a
\emph{message} (also known as a \emph{statistic} or \emph{potential})~\cite{restrepo11:_improv_mixin_condit_grid_count,
  sinclair_approximation_2012, li_approximate_2012, li_correlation_2011},  as opposed to the occupation ratio itself.
\begin{definition}\label{def:message}\textbf{(Message).}  
  A \emph{message} is a strictly increasing, continuously
  differentiable function $\phi: (0,\infty) \rightarrow \R$,
  such that the derivative of
  $\phi$ on any interval of the form $(0, M]$ is bounded away from
  $0$. 
\end{definition}
\noindent Note that the conditions on $\phi$ imply that it has a
continuously differentiable inverse.

In what follows, given a recurrence $f$ for the quantity $R_v$, we
will denote by $f^\phi$ the recurrence for the quantity
$\phi(R_v)$. Formally, 
\ifbool{csub}{%
$f^\phi(x_1,x_2,\ldots,x_d) \defeq
  \phi\inp{f\inp{\psi(x_1), \psi(x_2),\ldots,\psi(x_d)}},$
}{%
$$f^\phi(x_1,x_2,\ldots,x_d) \defeq
  \phi\inp{f\inp{\psi(x_1), \psi(x_2),\ldots,\psi(x_d)}},$$
}
where $\psi$ is the inverse of $\phi$.  Similarly, for a one-argument
recurrence $f$, $f^\phi (x) \defeq \phi\inp{f\inp{\psi\inp{x}}}.$

\begin{definition}\label{def:strong-spatial-mixing}
  \textbf{(Strong Spatial Mixing~\cite{Weitz06CountUptoThreshold}).}
  The hard core model with a fixed vertex activity $\lambda > 0$ is
  said to exhibit \emph{strong spatial mixing} on a family $\family{F}$ of graphs if
  for any graph $G$ in $\family{F}$, any vertex $v$ in $G$, and any two
  boundary conditions $\sigma$ and $\tau$ on $G$ which differ only at
  a distance of at least $\ell$ from $v$, we have
  \begin{equation*}
    \abs{R_v(\sigma, G) - R_v(\tau, G)} = \exp(-\Omega(\ell)).
  \end{equation*}
  In the definition, the family $\family{F}$ might consist of a single infinite
  graph.
\end{definition}

\subsection{Locally finite trees}
We introduce some notation and terminology for locally finite trees.  Let $T$ be a locally finite, but possibly
infinite, tree rooted at some vertex $\rho$.  For two vertices $u$ and
$v$, the statement %
``$u$ is an ancestor of $v$'' is denoted by $u < v$, while $u \leq v$
will denote the statement that ``either $u = v$ or $u$ is an ancestor
of $v$''.  For example, $\rho \leq v$ for all vertices $v$ in $T$.
For any two vertices $u, v$ in $T$ we denote by $d(u, v)$ the
distance between $u$ and $v$.  Further, $\abs{v} = d(\rho, v)$ denotes
the distance of a vertex $v$ from the root~$\rho$.

We will need the notion of a \emph{cutset} (see, for example,
Lyons~\cite{lyons_ising_1989}) in an infinite tree.
\begin{definition}
  \textbf{(Cutset).} Let $T$ be any locally finite infinite tree
  rooted at a vertex $\rho$.  A \emph{cutset} is a finite set of
  vertices $C$ such that (i) any infinite path starting at $\rho$ must
  intersect $C$; and (ii) no vertex in $C$ is an ancestor of another vertex in $C$.
\end{definition}
\begin{remark}
  Notice that if $C$ is a cutset, then so is the set of children of
  vertices in $C$.
\end{remark}
\noindent For a cutset $C$ we define its distance from the root $\rho$
as the minimum distance between the root $\rho$ and any vertex $v$ in
$C$, and denote this distance by $d(\rho, C)$.  Further, we denote by
$T_{\leq C}$ the restriction of $T$ to vertices which are not
descendants of vertices in $C$, and by $T_{<C}$ the
further restriction of $T_{\leq C}$ to vertices not in $C$.

\subsection{Self-avoiding walks and the connective constant}
\label{sec:connective-constant}
As discussed in the introduction, the connective constant is a natural
notion of ``effective degree'' that has been especially well studied in the
case of lattices.  For a vertex $v$ in a locally finite graph $G$, we
will denote by $N(v,\ell)$ the number of self-avoiding walks of length
$\ell$ starting at $v$.  The connective constant of a graph captures the
rate of growth of $N(v,\ell)$ as a function of $\ell$.
\begin{definition}
  \textbf{(Connective constant: infinite graphs).} Let $G=(V,E)$ be a locally
  finite infinite graph.  The \emph{connective constant}
  $\Delta(G)$ is defined as $\sup_{v\in
    V}\limsup_{\ell\rightarrow\infty}N(v,\ell)^{1/\ell}$.
\end{definition}
\begin{remark}
  For vertex-transitive graphs (such as Cartesian lattices),
  the supremum over $v$ can be removed without changing
  the definition.  Moreover, for such graphs, the $\limsup$ can be
  replaced by a limit.  We use
  $\limsup$ in order to avoid issues about the existence of the
  limit for more general classes of graphs.  
\end{remark}
The definition can be easily extended to finite graphs.  For
algorithmic applications, it is natural to define the connective
constant for a family of graphs parametrized by size.  
\begin{definition}
  \textbf{(Connective constant: finite graphs).} Let $\family{F}$ be a family
  of finite graphs.  The connective constant of $\family{F}$ is at most
  $\Delta$ if $\sum_{i=1}^\ell N(v, i) = O(\Delta^\ell)$.  More
  formally, we say that the connective constant of $\family{F}$ is at most
  $\Delta$ if there exist constants $a$ and $c$ such that for any
  graph $G = (V,E)$ in $\family{F}$ and any vertex $v$ in $G$, we have
  $\sum_{i=1}^\ell N(v, i) \leq c\Delta^\ell$ for all $\ell \geq a\log
  |V|$.
\end{definition}
Note that the connective constant of a graph of maximum degree $d+1$
is at most $d$.  However, the connective constant can be much smaller
than the maximum degree.  For example, the maximum degree of a graph
drawn from $\G\inp{n, d/n}$ is $\Theta(\log n/\log\log n)$ with high probability;  however,
it is easy to show (as we do in the proof of
Theorem~\ref{thm:gndn-ssm-alg} below) that for any fixed $\epsilon >
0$, the connective constant is at most $d(1+\epsilon)$ with high
probability.  Similarly, for the 2-dimensional Cartesian lattice
(which has maximum degree $4$) the connective constant is less than
2.68~\cite{alm_upper_2005,poenitz00:_improv_z}.

The set of self-avoiding walks starting at a vertex $v$ in a graph $G$ can be
naturally represented as a tree rooted at $v$, where each vertex in
the tree at distance $\ell$ from $v$ corresponds to a distinct self-avoiding
walk of length $\ell$. Any vertex $u$ in this tree can also be naturally
identified (many-to-one) with the vertex in $G$ at which the corresponding self-avoiding walk
ends.

Weitz~\cite{Weitz06CountUptoThreshold} showed that by fixing certain
vertices of this tree to be occupied or unoccupied, one obtains a tree
$\Ts{v, G}$ (which we will often refer to as the ``Weitz SAW tree'')
such that, for any boundary condition $\sigma$ in $G$, one has
\begin{displaymath}
  R_v(\sigma, G) = R_v(\sigma, \Ts{v, G}),
\end{displaymath}
where on the right hand side, by a slight abuse of notation, we denote
again by $\sigma$ the natural translation of the boundary condition
$\sigma$ on $G$ to $\Ts{v,G}$.  This result forms the cornerstone of
all recent results that use correlation decay for two-spin system on
trees to derive results for other
graphs~\cite{Weitz06CountUptoThreshold,
  restrepo11:_improv_mixin_condit_grid_count,
  sinclair_approximation_2012, li_approximate_2012,
  li_correlation_2011}.  The definition of $\Ts{v,G}$ implies that its
connective constant is always upper-bounded by the connective constant
of $G$. However, it can also be lower than that of $G$ because of the
additional boundary conditions introduced in Weitz's construction.

\ifbool{csub}{
  \paragraph{Notation}
}{%
  \paragraph{Notation.}
}%
For any bivariate function $g(x,y)$, we will
denote the partial derivative $\frac{\partial^{i+j}g}{\partial
  x^i\partial y^j}$ evaluated at $x = a, y = b$ as $g^{(i,j)}(a,b)$. 

\section{Messages on a tree}
\label{sec:messages-tree}

In this section, we study the behavior of a general recurrence $f$ on
a tree.  As before, $T$ is a tree rooted at a vertex $\rho$, and
the occupation ratios $R_\rho$ and $R_i$ are defined in Section~\ref{sec:tree-recurrence}.  We
begin with a version of the
mean value theorem adapted to our setting.  Given a message $\phi$ (as
in Definition~\ref{def:message}), let $\Phi = \phi'$ denote
the derivative of $\phi$; notice that $\Phi(x) > 0$ for all
non-negative $x$ since $\phi$ is strictly increasing.
\begin{lemma}\textbf{(Mean value theorem).}
  \label{lem:mean-value}
  Consider two vectors $\vec{x}$ and $\vec{y}$ in $\phi([0,
  \infty))^d$.  Then there exists a vector $\vec{z} \in [0,
  \infty)^d$ such that
  \begin{displaymath}
    \abs{f_{d,\lambda}^\phi(\vec{x}) - f_{d,\lambda}^\phi(\vec{y})} \leq
    \Phi\inp{f_{d,\lambda}(\vec{z})}\sum_{i=1}^d\frac{\abs{y_i-x_i}}{\Phi(z_i)}\abs{\pdiff{f_{d,\lambda}}{z_i}},
  \end{displaymath}
  where by a slight abuse of notation we denote by
  $\pdiff{f_{d,\lambda}}{z_i}$ the partial derivative of
  $f_{d,\lambda}(R_1, R_2,\ldots,R_d)$ with respect to $R_i$ evaluated
  at $\vec{R} = \vec{z}$.
\end{lemma}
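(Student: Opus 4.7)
The plan is to apply the standard multivariate mean value theorem to the scalar function $f_{d,\lambda}^\phi$ along the line segment from $\vec{x}$ to $\vec{y}$, and then convert the derivatives of $f_{d,\lambda}^\phi$ into derivatives of $f_{d,\lambda}$ via the chain rule.

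First I would verify that the segment stays in the domain. Since $\phi$ is strictly increasing and continuously differentiable, $\phi([0,\infty))$ is an interval (and hence convex) in $\R$, so every point $\vec{w} = t\vec{x} + (1-t)\vec{y}$ with $t \in [0,1]$ lies in $\phi([0,\infty))^d$, and $\vec{z} \defeq (\psi(w_1), \ldots, \psi(w_d)) \in [0,\infty)^d$ is well-defined. Applying the standard multivariate mean value theorem (via the univariate MVT applied to $t \mapsto f_{d,\lambda}^\phi(t\vec{x} + (1-t)\vec{y})$) yields a point $\vec{w}$ on this segment such that
\begin{equation*}
  f_{d,\lambda}^\phi(\vec{x}) - f_{d,\lambda}^\phi(\vec{y}) = \sum_{i=1}^d (x_i - y_i)\, \pdiff{f_{d,\lambda}^\phi}{w_i}(\vec{w}).
\end{equation*}

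Next I would compute the partial derivatives using the definition $f_{d,\lambda}^\phi(\vec{w}) = \phi(f_{d,\lambda}(\psi(w_1),\ldots,\psi(w_d)))$ and the chain rule:
\begin{equation*}
  \pdiff{f_{d,\lambda}^\phi}{w_i}(\vec{w}) = \phi'\inp{f_{d,\lambda}(\vec{z})}\cdot \pdiff{f_{d,\lambda}}{R_i}\bigg|_{\vec{R}=\vec{z}}\cdot \psi'(w_i).
\end{equation*}
Since $\psi$ is the inverse of $\phi$, the inverse function theorem gives $\psi'(w_i) = 1/\phi'(\psi(w_i)) = 1/\Phi(z_i)$, which is well-defined because Definition~\ref{def:message} guarantees $\Phi = \phi' > 0$ (and in fact bounded away from $0$ on bounded intervals). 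Substituting and recalling $\Phi = \phi'$, we obtain
\begin{equation*}
  \pdiff{f_{d,\lambda}^\phi}{w_i}(\vec{w}) = \frac{\Phi\inp{f_{d,\lambda}(\vec{z})}}{\Phi(z_i)}\cdot \pdiff{f_{d,\lambda}}{R_i}\bigg|_{\vec{R}=\vec{z}}.
\end{equation*}

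Finally, combining the two displays and applying the triangle inequality (factoring the positive quantity $\Phi(f_{d,\lambda}(\vec{z}))$ outside the sum) yields the claimed inequality. There is no serious obstacle here: the statement is essentially a bookkeeping exercise that repackages the multivariate MVT through the bijection $\phi$. The only points that merit care are (i) the convexity of $\phi([0,\infty))$, which ensures $\vec{z}$ lies in the claimed domain, and (ii) the positivity of $\Phi$, which ensures the denominators $\Phi(z_i)$ make sense and allows one to pull the common factor $\Phi(f_{d,\lambda}(\vec{z}))$ out of the sum.
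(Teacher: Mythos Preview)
Your proposal is correct and follows essentially the same approach as the paper: parametrize the segment by $t\mapsto f_{d,\lambda}^\phi(t\vec{x}+(1-t)\vec{y})$, apply the scalar mean value theorem, set $\vec{z}=\psi(\vec{w})$, expand via the chain rule using $\psi'(w_i)=1/\Phi(z_i)$, and finish with the triangle inequality. Your explicit remark about convexity of $\phi([0,\infty))$ is a nice addition that the paper leaves implicit.
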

\noindent We defer the proof of this lemma to
\ifbool{csub}{%
  the full version~\cite{sinclair13:_spatial}. %
}{%
  Appendix~\ref{sec:proof-lemma-refl}. %
} %
For the special case of the tree recurrence of the hard core
model~(eq. \ref{eq:2}), Lemma~\ref{lem:mean-value} implies that 
\begin{equation}
  \abs{f_{d,\lambda}^\phi(\vec{x}) - f_{d,\lambda}^\phi(\vec{y})} \leq
  f_{d,\lambda}(\vec{z})\Phi\inp{f_{d,\lambda}(\vec{z})} \sum_{i=1}^d 
  \frac{\abs{y_i-x_i}}{(1+z_i)\Phi(z_i)}.\label{eq:10}
\end{equation}
The first step of our approach is similar to that taken in the
papers~\cite{restrepo11:_improv_mixin_condit_grid_count,
  sinclair_approximation_2012,
  li_approximate_2012,li_correlation_2011} in that we will use an
appropriate message---along with the estimate in
Lemma~\ref{lem:mean-value}---to argue that the ``distance'' between two
input message vectors $\vec{x}$ and $\vec{y}$ at the children of a
vertex shrinks by a constant factor at each step of the recurrence.
Previous works on the
subject~\cite{restrepo11:_improv_mixin_condit_grid_count,
  sinclair_approximation_2012, li_approximate_2012,
  li_correlation_2011} show such a decay on some version of the
$\ell_\infty$ norm of the ``error'' vector $\vec{x} - \vec{y}$: this
is achieved by bounding the appropriate dual $\ell_1$ norm of the
gradient of the recurrence.  Our intuition is that in order to achieve
a bound in terms of a global quantity such as the connective constant, it
should be advantageous to use a more global measure of the error such
as an $\ell_q$ norm
\ifbool{csub}{%
  for $q < \infty$. %
}{%
, for some $q < \infty$. %
}%

In line with the above plan, we first prove the following lemma,
specialized here to the case of the hard core model.  For ease of
notation, we assume throughout that the vertex activity $\lambda > 0$ is fixed
and suppress dependence on $\lambda$.
\begin{lemma}
  \label{lem:tech}
  Let $\phi$ be a message and let $\Phi = \phi'$ be its derivative.
  Let $p$ and $q$ be positive reals such that $\frac{1}{p} +
  \frac{1}{q} = 1$.  Define the functions $S_{\phi, p}$ and
  $\Xi_{\phi,q}(d, x)$ as follows:
  \ifbool{csub}{%
    \begin{displaymath}
      S_{\phi, p}(x) \!\defeq\! \frac{e^{-px}}{\Phi\inp{e^{x}\!-\!1}^p};
      \;  \Xi_{\phi,q}(d, x) \!\defeq\!
      \frac{1}{d}\inp{\frac{d\Phi(f_d(x))f_d(x)}{(1\!+\!x)\Phi(x)}}^q.
    \end{displaymath}%
  }{%
    \begin{displaymath}
      S_{\phi, p}(x) \defeq \inp{\frac{e^{-x}}{\Phi\inp{e^{x} - 1}}}^p;
      \qquad  \Xi_{\phi,q}(d, x) \defeq
      d^{q-1}\inp{\frac{\Phi(f_d(x))f_d(x)}{(1+x)\Phi(x)}}^q.
    \end{displaymath}%
  }%
  We further define $\xi_{\phi,q}(d) \defeq \sup_{x \geq 0}\Xi_{\phi,q}(d,
  x)$.  If $S_{\phi,p}$ is a \emph{concave} function on the non-negative
  reals, then for any two vectors $\vec{x}, \vec{y}$ in
  $\phi(\R^+)^d$, we have
  \begin{displaymath}
    \abs{f_d^\phi(\vec{x}) - f_d^\phi(\vec{y})}^q \leq
    {\xi_{\phi,q}(d)}\norm[q]{\vec{x} - \vec{y}}^q.
  \end{displaymath}
\end{lemma}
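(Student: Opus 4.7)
The plan is to combine the mean value estimate~\eqref{eq:10} with H\"older's inequality to produce an $\ell_q$ bound, then apply Jensen's inequality (via the concavity hypothesis on $S_{\phi,p}$) together with a multiplicative identity special to $f_{d,\lambda}$ in order to replace the intermediate vector $\vec z$ supplied by the mean value theorem with a single scalar lying in the one-variable supremum defining $\xi_{\phi,q}(d)$.

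First I would raise~\eqref{eq:10} to the $q$-th power and apply H\"older's inequality to the sum, with $a_i = 1/[(1+z_i)\Phi(z_i)]$ and $b_i = \abs{y_i - x_i}$ and conjugate exponents $p,q$; using $q/p = q-1$ (which follows from $1/p + 1/q = 1$), this gives
\begin{equation*}
\abs{f_d^\phi(\vec x) - f_d^\phi(\vec y)}^q \leq [f_d(\vec z)\,\Phi(f_d(\vec z))]^q \inp{\sum_{i=1}^d \frac{1}{[(1+z_i)\,\Phi(z_i)]^p}}^{q-1} \norm[q]{\vec x - \vec y}^q.
\end{equation*}
Next, make the substitution $t_i \defeq \log(1+z_i)$: under this change of variable the $i$-th summand becomes exactly $S_{\phi,p}(t_i)$. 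Concavity of $S_{\phi,p}$ and Jensen's inequality then yield $\sum_i S_{\phi,p}(t_i) \leq d\,S_{\phi,p}(\bar t)$, where $\bar t \defeq d^{-1}\sum_i t_i$. Letting $\bar z \defeq e^{\bar t}-1$, so that $1+\bar z$ is the geometric mean of the $1+z_i$, and using the identity $p(q-1) = q$, the inner factor above is bounded by $d^{q-1}/[(1+\bar z)\,\Phi(\bar z)]^q$.

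The remaining prefactor $[f_d(\vec z)\,\Phi(f_d(\vec z))]^q$ is handled by the crucial multiplicative identity $f_d(\vec z) = \lambda/\prod_i(1+z_i) = \lambda/(1+\bar z)^d = f_d(\bar z)$, which collapses every $\vec z$-dependent quantity into a function of the single scalar $\bar z$. Assembling the three bounds produces exactly $\Xi_{\phi,q}(d,\bar z)\,\norm[q]{\vec x - \vec y}^q$, which is at most $\xi_{\phi,q}(d)\,\norm[q]{\vec x - \vec y}^q$ by definition of the supremum. The main obstacle is spotting the substitution $t=\log(1+z)$: this is what simultaneously turns the weights emerging from H\"older into arguments of $S_{\phi,p}$ (so that concavity can be applied) and meshes with the multiplicative form of $f_{d,\lambda}$ (so that the $d$-dimensional intermediate vector can be replaced by a single geometric-mean scalar), reducing the final bound to a one-variable supremum that is independent of the structure of the tree beyond its arity.
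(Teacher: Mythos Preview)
Your proof is correct and follows essentially the same approach as the paper: the paper also combines eq.~\eqref{eq:10} with H\"older's inequality, the substitution $t_i=\log(1+z_i)$ and Jensen applied to $S_{\phi,p}$, and the identity $f_d(\vec z)=f_d(\bar z)$ for the geometric-mean scalar $\bar z$. The only cosmetic difference is ordering---the paper records the Jensen step first and then applies H\"older---but the ingredients and the logic are identical.
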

\begin{proof}
  The concavity of $S_{\phi,p}(x)$ for
  non-negative $x$, combined with Jensen's inequality, implies that for
  any vector $\vec{z} \in [0,\infty)^d$, and $Z=\prod_{i=1}^d(1+z_i)^{1/d}-1$,
\ifbool{csub}{%
  \begin{align}
    \frac{1}{d}\sum_{i=1}^d\inp{\frac{1}{(1+z_i)\Phi(z_i)}}^p 
    &=
    \frac{1}{d}\sum_{i=1}^d S_{\phi,p}\inp{\ln(1+z_i)}\nonumber\\
    &\leq
    S_{\phi,p}\inp{\frac{1}{d}\sum_{i=1}^d\ln(1+z_i) }\nonumber\\
    &=
    \inp{\frac{1}{(1+Z)\Phi(Z)}}^p.\label{eq:Jensen}
  \end{align}
}{%
  \begin{equation}
    \frac{1}{d}\sum_{i=1}^d\inp{\frac{1}{(1+z_i)\Phi(z_i)}}^p 
    =
    \frac{1}{d}\sum_{i=1}^d S_{\phi,p}\inp{\ln(1+z_i)}
    \leq
    S_{\phi,p}\inp{\frac{1}{d}\sum_{i=1}^d\ln(1+z_i) }
    =
    \inp{\frac{1}{(1+Z)\Phi(Z)}}^p.\label{eq:Jensen}
  \end{equation}%
}%
  Besides, it is easy to verify that $f_{d}(Z) = f_{d}(\vec{z})$.
  Now, we apply Lemma\nobreakspace \ref {lem:mean-value}
  (specifically, its consequence in eq.~(\ref{eq:10})).  Assume that $\vec{z}$ is as
  defined in that lemma, and again let $Z$ denote $\prod_{i=1}^d(1+z_i)^{1/d}-1$.  We
  then have 
\ifbool{csub}{%
  \begin{align}
    \abs{f_d^\phi(\vec{x}) \!-\! f_d^\phi(\vec{y})} &=
    \Phi(f_d(\vec{z}))f_d(\vec{z})
    \sum_{i=1}^d\frac{\abs{y_i -
        x_i}}{(1+z_i)\Phi(z_i)}\nonumber\\
    &\leq d^{1/p}\Phi(f_d(Z))f_d(Z) \nonumber\\
    &\quad\cdot\inp{\frac{1}{d}\sum_{i=1}^d
      \inp{\frac{1}{(1\!+\!z_i)\Phi(z_i)}}^p}^{1/p}\norm[q]{\vec{x}\!-\!\vec{y}},\nonumber\\
    &\leq
    \frac{d^{1/p}\Phi\inp{f_d(Z)}f_d(Z)}{(1\!+\!Z)\Phi\inp{Z}}\norm[q]{\vec{x}\!-\!\vec{y}},\nonumber
  \end{align}%
  where we use eq.~(\ref{eq:10}) in the first relation, $f_{d}(Z) =
  f_{d}(\vec{z})$ and Hölder's inequality in the second relation, and
  eq.\nobreakspace \textup {(\ref {eq:Jensen})} in the third
  relation. %
}{%
  \begin{align}
    \abs{f_d^\phi(\vec{x}) - f_d^\phi(\vec{y})} &=
    \Phi(f_d(\vec{z}))f_d(\vec{z})
    \sum_{i=1}^d\frac{\abs{y_i -
        x_i}}{(1+z_i)\Phi(z_i)}\qquad\text{using eq.~(\ref{eq:10})}\nonumber\\
    &\leq d^{1/p}\Phi(f_d(Z))f_d(Z) \inp{\frac{1}{d}\sum_{i=1}^d
      \inp{\frac{1}{(1+z_i)\Phi(z_i)}}^p}^{1/p}\norm[q]{\vec{x} - \vec{y}},\nonumber\\
    &\qquad\text{using $f_{d}(Z) = f_{d}(\vec{z})$ and
      Hölder's inequality}
    \nonumber\\
    &\leq
    \frac{d^{1/p}\Phi\inp{f_d(Z)}f_d(Z)}{(1+Z)\Phi\inp{Z}}\norm[q]{\vec{x}-\vec{y}},
    \text{ using eq.\nobreakspace \textup {(\ref {eq:Jensen})}}.\nonumber
  \end{align}%
}%
Raising both sides to the $q$th power, using $\frac{1}{p} +
\frac{1}{q} = 1$, and the definitions of the function $\Xi$ and $\xi$,
we get the claimed inequality.
\end{proof}
Let $\phi$ be a message satisfying the conditions of Lemma\nobreakspace \ref {lem:tech}.
Lemma\nobreakspace \ref {lem:tech} then implies the following general lemma on
propagation of ``errors'' in locally finite infinite trees.  As
before, we denote by $R_\rho(\sigma)$ the occupation probability of
the root $\rho$ in the hard core model with boundary condition
$\sigma$.  We consider the dependence of $R_\rho(\sigma)$ on the
boundary conditions $\sigma$ which are fixed everywhere except at some
cutset $C$.  For technical reasons, we will assume that the conditions
are actually allowed to differ not on the cutset $C$ itself, but on
the cutset $C'$ which is the set of children of the vertices in
$C$.
\begin{lemma}
  Let $T$ be a locally finite tree rooted at $\rho$. Let $C$ be a
  cutset in $T$ at distance at least $1$ from the root, and $C'$ the
  cutset comprising of all children of vertices in $C$.  Consider two
  arbitrary boundary conditions $\sigma$ and $\tau$ on $T_{\leq C'}$
  which differ only on $C'$.  If $\phi$ and $q$ satisfy the conditions
  of Lemma\nobreakspace \ref {lem:tech}, we have
  \begin{equation*}
    |R_\rho(\sigma) - R_\rho(\tau)|^q \leq \frac{c_0\xi_{\phi,q}(d_\rho)}{\alpha'}
    \sum_{v \in C} \alpha^{|v|},
  \end{equation*}
  where $d_\rho$ is the degree of the root $\rho$, $c_0$ is a constant
  depending only upon $q$, $\lambda$ and the message $\phi$, while
  $\alpha = \sup \xi_{\phi,q}(d)$, where the supremum is taken over
  the arities $d$ of all vertices in $T$ except the root $\rho$, and
  $\alpha' = \inf_{d \geq 1} \xi_{\phi,q}(d)$. 
  \label{lem:general-tree}
\end{lemma}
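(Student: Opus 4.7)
The plan is to apply Lemma~\ref{lem:tech} iteratively along every root-to-cutset path in $T$, and then bound the boundary contribution at $C$ by a uniform constant. For each vertex $v$, define $E_v \defeq |\phi(R_v(\sigma)) - \phi(R_v(\tau))|^q$. At any non-cutset vertex $u \notin C$, the hard-core tree recurrence~(\ref{eq:2}) expresses $R_u$ in terms of its children's ratios; those ratios lie in $[0, \lambda]$, so their $\phi$-images form valid inputs to Lemma~\ref{lem:tech}, and a direct application gives the one-step contraction
\begin{displaymath}
E_u \leq \xi_{\phi,q}(d_u) \sum_{v \text{ a child of } u} E_v.
\end{displaymath}

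Iterating this contraction downward---peeling off a factor of $\xi_{\phi,q}$ at each non-cutset vertex and halting at vertices of $C$---we obtain
\begin{displaymath}
E_\rho \leq \sum_{v \in C} \Bigl(\prod_{\rho \leq u < v} \xi_{\phi,q}(d_u)\Bigr) E_v \leq \xi_{\phi,q}(d_\rho) \sum_{v \in C} \alpha^{|v|-1} E_v,
\end{displaymath}
where the second inequality separates the root factor and bounds each of the remaining $|v|-1$ factors by $\alpha$, the supremum of $\xi_{\phi,q}(d)$ over arities of non-root vertices of $T$. This is where the choice of $\alpha$ as a supremum over non-root vertices (rather than over all of $T$) becomes crucial, since the root arity $d_\rho$ is already isolated.

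The main step remaining is to bound each boundary term $E_v$ for $v \in C$ by a constant $K$ depending only on $\phi$, $\lambda$, and $q$. Since the recurrence forces $R_v \in [0, \lambda]$ under any boundary condition---a fixed child in $C'$ simply contributes a multiplicative factor of $0$ or $1$ to $\lambda\prod_i 1/(1+R_i)$, and a free child contributes a factor in $[1/(1+\lambda), 1]$---the quantity $|\phi(R_v(\sigma)) - \phi(R_v(\tau))|$ is bounded by a constant in terms of $\phi$ and $\lambda$, with the value $R_v = 0$ handled via the limit $\phi(0^+)$ (finite for the messages used in applications). Combining $E_v \leq K$ with $\alpha^{|v|-1} \leq \alpha^{|v|}/\alpha'$ (using $\alpha \geq \alpha'$), and finally invoking the mean value theorem together with the fact that $\phi'$ is bounded away from $0$ on $(0, \lambda]$ to convert $E_\rho$ into $|R_\rho(\sigma) - R_\rho(\tau)|^q$, yields the claimed bound after absorbing all constants into $c_0$. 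The main obstacle is the uniform boundedness of $E_v$ at the cutset, which rests on $\phi$ not diverging too rapidly near the endpoints of $[0, \lambda]$; this is precisely why $c_0$ is allowed to depend on $\phi$ and $\lambda$.
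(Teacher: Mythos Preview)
Your proposal is correct and follows essentially the same approach as the paper's proof: define $E_v = |\phi(R_v(\sigma)) - \phi(R_v(\tau))|^q$, use Lemma~\ref{lem:tech} as a one-step contraction, propagate down to the cutset~$C$, bound the boundary terms there via $R_v \in [0,\lambda]$ (the paper's constant is $M^q$ with $M = \phi(\lambda) - \phi(0)$), and convert back using $L = \min_{x \in [0,\lambda]} \phi'(x)$. The only cosmetic difference is that the paper packages this as a bottom-up induction with hypothesis $E_v \leq \frac{c_1\xi(d_v)}{\alpha'}\sum_{u \in C_v}\alpha^{|u|-\delta_v}$ (so $\alpha'$ enters at the base case via $\xi(d_v) \geq \alpha'$), whereas you unroll top-down and introduce $\alpha'$ at the end via $\alpha^{|v|-1} \leq \alpha^{|v|}/\alpha'$; both routes yield the identical constant $c_0 = M^q/L^q$.
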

For a proof of this lemma, see 
\ifbool{csub}{%
  the full version~\cite{sinclair13:_spatial}. %
}{%
  Appendix~\ref{sec:proof-lemma-refl}. %
}

\section{A special message}
\label{sec:special-message}

We now instantiate the approach outlined in
Section~\ref{sec:messages-tree} to prove Theorems\nobreakspace \ref {thm:main-1} and\nobreakspace  \ref {thm:main-2}.
Our message is the same as that
used in~\cite{li_correlation_2011}; %
\ifbool{csub}{%
  we define
  \begin{equation}
    \phi(x) \defeq \sinh^{-1}\sqrt{x}\text{; }\Phi(x) \defeq
    \phi'(x) = \frac{1}{2\sqrt{x(1+x)}}.\label{eq:4}
  \end{equation}%
}{%
  we choose
  \begin{equation}
    \phi(x) \defeq \sinh^{-1}\inp{\sqrt{x}}\text{, so that }\Phi(x) \defeq
    \phi'(x) = \frac{1}{2\sqrt{x(1+x)}}.\label{eq:4}
  \end{equation}%
}%
Notice that $\phi$ is a strictly increasing, continuously
differentiable function on $(0, \infty)$, and also satisfies the
technical condition on the derivative $\Phi$ as required in the
definition of a message. 
Moreover, we choose $p=q=2$.
Our choices are
designed to satisfy the conditions of Lemma\nobreakspace \ref {lem:tech}.  We now proceed
to show that this is indeed the case.
\begin{observation}
  The function $S_{\phi,2}(x)$, when $\phi$ is as defined in eq.\nobreakspace \textup {(\ref {eq:4})},
  is concave for $x \geq 0$.
\end{observation}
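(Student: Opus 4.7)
The plan is simply to compute $S_{\phi,2}$ in closed form and take two derivatives; no technical obstacles are anticipated.

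First I would substitute the definition of $\Phi$ from eq.\nobreakspace \textup {(\ref {eq:4})} into the definition of $S_{\phi,p}$ with $p=2$. Since $\Phi(e^x-1) = \frac{1}{2\sqrt{(e^x-1)e^x}}$, we have $\Phi(e^x-1)^2 = \frac{1}{4e^x(e^x-1)}$, and therefore
\begin{equation*}
  S_{\phi,2}(x) \;=\; \frac{e^{-2x}}{\Phi(e^x-1)^2} \;=\; 4e^{-2x}\cdot e^x(e^x-1) \;=\; 4\bigl(1-e^{-x}\bigr).
\end{equation*}

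Differentiating twice then gives $S_{\phi,2}''(x) = -4e^{-x} < 0$ for all $x \geq 0$, so $S_{\phi,2}$ is (strictly) concave on the non-negative reals, as claimed. The key point is just the algebraic cancellation that collapses $S_{\phi,2}$ into the elementary concave function $4(1-e^{-x})$; this is what motivates the specific choice of message in eq.\nobreakspace \textup {(\ref {eq:4})}, making this a one-line verification rather than a genuine analytic problem. There is no subtle step to flag as the main obstacle.
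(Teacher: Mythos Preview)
Your proof is correct and follows exactly the same approach as the paper: both reduce $S_{\phi,2}(x)$ to the closed form $4(1-e^{-x})$, from which concavity is immediate. The paper simply states this identity and asserts concavity; your version supplies the intermediate algebra and the explicit second derivative, but the argument is the same.
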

\begin{proof}
  For $x \geq 0$, and $\Phi$ defined in eq.\nobreakspace \textup {(\ref {eq:4})} we have
  $S_{\phi, 2}(x) = 4(1 - e^{-x})$, which is concave.
\end{proof}
In order to see what Lemma\nobreakspace \ref {lem:tech} implies with
our message, we first analyze the function $\xi(d)\defeq\xi_{\phi,2}(d)$ in
some detail.  The proof of the following simple lemma follows from 
arguments in \cite{li_correlation_2011};  however, we include a
proof here for completeness.  In what follows, we drop the subscripts
$\phi$ and $q$ since these
will always be clear from the context.
\begin{lemma}
  \label{lem:fixed-point}
  Consider the hard core model with any fixed vertex activity $\lambda
  > 0$. With $\phi$ as defined in eq.\nobreakspace \textup {(\ref {eq:4})}, 
  we have $\xi(d) =
  \Xi(d, \tilde{x}_\lambda(d))$, where $\tilde{x}_\lambda(d)$ is the unique
  solution to
  \begin{equation}
    d\tilde{x}_\lambda(d) = 1 + f_{d,\lambda}(\tilde{x}_\lambda(d)). \label{eq:6}
  \end{equation}
\end{lemma}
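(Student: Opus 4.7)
The plan is to verify the claim by a direct computation of the critical points of $\Xi(d,\cdot)$ with the specific message of eq.~(\ref{eq:4}), showing that $\tilde{x}_\lambda(d)$ is the unique interior critical point and that $\Xi$ vanishes at the boundary of $[0,\infty)$, so the supremum is attained there.

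First I would simplify $\Xi(d,x)$ using $\Phi(x)=\tfrac{1}{2\sqrt{x(1+x)}}$ and $f_d(x)=\lambda/(1+x)^d$. A short calculation gives
\begin{equation*}
  \frac{\Phi(f_d(x))\,f_d(x)}{(1+x)\Phi(x)} \;=\; \sqrt{\frac{x\,f_d(x)}{(1+x)(1+f_d(x))}},
\end{equation*}
and hence (with $q=2$)
\begin{equation*}
  \Xi(d,x) \;=\; \frac{d\,x\,f_d(x)}{(1+x)(1+f_d(x))}.
\end{equation*}
This closed form makes the rest of the argument routine.

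Next I would check boundary behaviour. At $x=0$, the numerator vanishes while the denominator equals $1+\lambda$, so $\Xi(d,0)=0$. As $x\to\infty$, $f_d(x)\to 0$ and the expression is dominated by $\lambda/x^{d-1}\to 0$ for any $d\geq 1$. Hence the supremum of $\Xi(d,\cdot)$ over $[0,\infty)$ is attained at an interior critical point. Taking $\partial_x\log\Xi(d,x)$, using $f_d'(x)=-df_d(x)/(1+x)$, the critical point equation reduces (after clearing denominators) to
\begin{equation*}
  (1+x)(1+f_d(x)) \;=\; x\bigl(d+1+f_d(x)\bigr),
\end{equation*}
which simplifies to $dx = 1 + f_d(x)$, i.e.\ exactly the equation defining $\tilde{x}_\lambda(d)$.

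Finally I would verify uniqueness of the positive solution by examining $g(x)\defeq dx-1-f_d(x)$: we have $g(0)=-1-\lambda<0$, $g(x)\to\infty$ as $x\to\infty$, and $g'(x)=d+\tfrac{d\,f_d(x)}{1+x}>0$, so $g$ has exactly one positive root $\tilde{x}_\lambda(d)$. Since this is the only interior critical point of $\Xi(d,\cdot)$ and the function vanishes at both boundary points, it must be the global maximum, yielding $\xi(d)=\Xi(d,\tilde{x}_\lambda(d))$ as claimed. There is no genuine obstacle here; the only mildly delicate point is the algebraic simplification of the critical point equation to the clean form $dx=1+f_d(x)$, which is where the specific choice of message $\phi=\sinh^{-1}\sqrt{\cdot}$ pays off.
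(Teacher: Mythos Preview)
Your proof is correct and follows essentially the same route as the paper's: both simplify $\Xi(d,x)$ to $\dfrac{dx\,f_d(x)}{(1+x)(1+f_d(x))}$, derive the critical-point equation $dx=1+f_d(x)$, and verify uniqueness via the strict monotonicity of $dx-1-f_d(x)$. One small slip: the asymptotic as $x\to\infty$ is $\Xi(d,x)\sim d\lambda/x^{d}$, not $\lambda/x^{d-1}$ (the latter would not vanish for $d=1$); the conclusion $\Xi\to 0$ is unaffected.
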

\begin{proof}
Plugging in $\Phi$ from eq.\nobreakspace \textup {(\ref {eq:4})} in the definition of $\Xi$ and
using $q = 2$, we get 
\begin{displaymath}
  \Xi(d, x) = \frac{dx}{1+x}\frac{f_{d,\lambda}(x)}{1+f_{d,\lambda}(x)}.
\end{displaymath}
Taking the partial derivative with respect to the second argument, we
get 
\begin{displaymath}
  \Xi^{(0,1)}(d,x) = \frac{\Xi(d,x)}{x(1+x)\inp{1+f_{d,\lambda}(x)}}
  \insq{1 + f_{d,\lambda}(x)  - dx}. 
\end{displaymath}
For fixed $d$, the quantity outside the square brackets is always
positive, while the expression inside the square brackets is strictly
decreasing in $x$.  Thus, any zero of the expression in the brackets
will be a unique maximum of $\Xi$.  The fact that such a zero exists
follows by noting that the partial derivative is positive at $x = 0$
and negative as $x \rightarrow \infty$.  Thus, $\Xi(d, x)$ is
maximized at $\tilde{x}_{\lambda}(d)$ as defined above, and hence
$\xi(d) = \Xi(d, \tilde{x}_\lambda(d))$, as claimed.
\end{proof}
We now define $\nu_{\lambda}(d) \defeq \xi(d)$.  We first
show that this definition agrees with the one used in the statement of
Theorem\nobreakspace \ref {thm:main-2}, and then derive some monotonicity properties of the
function $\nu$.
\begin{lemma}\label{lem:tau-props}
  For a given $\lambda > 0$ and a positive integer $d$ let $\tilde{x}_\lambda(d)$
  be the unique solution to eq.\nobreakspace \textup {(\ref {eq:6})}.  We then have
  \begin{enumerate}
  \item $\nu_\lambda(d) = \frac{d\tilde{x}_{\lambda}(d) - 1}{1 +
      \tilde{x}_{\lambda}(d)}$ and $\nu_\lambda(d) = \frac{1}{d}$
    when $\lambda = \lambda_c(d)$.\label{item:1}
  \item $\nu_\lambda(d)$ is increasing in $d$ for fixed $\lambda > 0$.\label{item:2}
  \item $\nu_\lambda(d)$ is increasing in $\lambda$ for fixed $d >
    0$.\label{item:3}
  \end{enumerate}
\end{lemma}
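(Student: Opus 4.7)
The plan is to prove the three claims in order, relying on algebraic manipulation of the fixed-point equation~\eqref{eq:6} for claim~(1) and on implicit differentiation for claims~(2) and~(3).

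For the first identity in claim~(1), substitute $f_{d,\lambda}(\tilde{x}) = d\tilde{x}-1$ (which is just~\eqref{eq:6}) into the formula $\Xi(d,x) = \frac{dx}{1+x}\cdot\frac{f_{d,\lambda}(x)}{1+f_{d,\lambda}(x)}$ derived in the proof of Lemma~\ref{lem:fixed-point}.  The second factor becomes $(d\tilde{x}-1)/(d\tilde{x})$; one factor of $d\tilde{x}$ cancels, leaving $\nu_\lambda(d) = (d\tilde{x}-1)/(1+\tilde{x})$.  For the second part of claim~(1), set $\nu_\lambda(d)=1/d$ in this identity, solve the resulting linear equation to force $\tilde{x}=1/(d-1)$, and substitute into the rearranged fixed-point equation $\lambda = (d\tilde{x}-1)(1+\tilde{x})^d$.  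Direct simplification recovers $\lambda_c(d) = d^d/(d-1)^{d+1}$.

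For claims~(2) and~(3), I treat $d$ as a positive real parameter and implicitly differentiate $d\tilde{x} = 1 + \lambda/(1+\tilde{x})^d$ with respect to $\lambda$ and to $d$.  Using $\lambda/(1+\tilde{x})^d = d\tilde{x}-1$ to eliminate $\lambda$ and gathering terms yields
\begin{equation*}
  \pdiff{\tilde{x}}{\lambda} = \frac{(d\tilde{x}-1)(1+\tilde{x})}{\lambda\, d\tilde{x}(d+1)},\qquad \pdiff{\tilde{x}}{d} = -\frac{(1+\tilde{x})\bigl[\tilde{x} + (d\tilde{x}-1)\ln(1+\tilde{x})\bigr]}{d\tilde{x}(d+1)}.
\end{equation*}
Differentiating $\nu_\lambda(d) = (d\tilde{x}-1)/(1+\tilde{x})$ in turn, and invoking the identity $d\tilde{x}^2 - \tilde{x} = \tilde{x}(d\tilde{x}-1)$ to simplify the $d$-derivative, gives the compact forms
\begin{equation*}
  \pdiff{\nu_\lambda(d)}{\lambda} = \frac{d+1}{(1+\tilde{x})^2}\pdiff{\tilde{x}}{\lambda},\qquad \pdiff{\nu_\lambda(d)}{d} = \frac{(d\tilde{x}-1)\bigl[\tilde{x}-\ln(1+\tilde{x})\bigr]}{d\tilde{x}(1+\tilde{x})}.
\end{equation*}
Both are manifestly positive: the first because $d\tilde{x}>1$ (a consequence of $f_{d,\lambda}(\tilde{x})>0$ in~\eqref{eq:6}), and the second because $\tilde{x}>\ln(1+\tilde{x})$ for all $\tilde{x}>0$.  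Since claim~(2) is stated for integer $d$, the required discrete monotonicity follows from this continuous-parameter statement.

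The main obstacle is reaching the clean factored form of $\partial\nu_\lambda/\partial d$: the raw expression obtained after differentiation mixes $\tilde{x}$, $\ln(1+\tilde{x})$, and $\partial\tilde{x}/\partial d$ with no apparent sign structure, and only after substituting the implicit derivative and using $d\tilde{x}^2 - \tilde{x} = \tilde{x}(d\tilde{x}-1)$ does the factor $(d\tilde{x}-1)$ emerge cleanly to reveal positivity.  Claim~(3) is by comparison routine once the sign of $\partial\tilde{x}/\partial\lambda$ is in hand.
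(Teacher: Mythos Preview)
Your proof is correct and arrives at the same final expressions as the paper, but the paper takes a shortcut for item~(2) that you may find worth knowing.  Rather than computing $\partial\tilde{x}/\partial d$ explicitly and substituting, the paper exploits the variational characterization of $\tilde{x}$: since $\tilde{x}$ is the maximizer of $\Xi(d,\cdot)$, one has $\Xi^{(0,1)}(d,\tilde{x})=0$, and the chain rule then gives $\nu_\lambda'(d)=\Xi^{(1,0)}(d,\tilde{x})$ directly.  Differentiating $\Xi(d,x)=\frac{dx}{1+x}\cdot\frac{f_{d,\lambda}(x)}{1+f_{d,\lambda}(x)}$ in $d$ alone and using $1+f_{d,\lambda}(\tilde{x})=d\tilde{x}$ yields $\nu_\lambda'(d)=\frac{\Xi(d,\tilde{x})}{d\tilde{x}}[\tilde{x}-\log(1+\tilde{x})]$, which is exactly your formula once one substitutes $\Xi(d,\tilde{x})=(d\tilde{x}-1)/(1+\tilde{x})$.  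This envelope-type argument sidesteps the algebraic step you flagged as the main obstacle (extracting the factor $d\tilde{x}-1$).  For item~(3) the paper is similarly economical: it notes that $(d\tilde{x}-1)/(1+\tilde{x})$ is increasing in $\tilde{x}$ and that the relation $\lambda=(d\tilde{x}-1)(1+\tilde{x})^d$ has right-hand side increasing in $\tilde{x}$ (since $d\tilde{x}\geq 1$), so $\tilde{x}$ increases with $\lambda$ without any derivative computation.  Your explicit-derivative route is equally valid, just more laborious.
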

The proof of the above lemma %
\ifbool{csub}{%
is deferred to the full version~\cite{sinclair13:_spatial}. %
}{%
is somewhat technical, and is deferred to Appendix~\ref{sec:monot-prop-nu_l}. %
} %
We now proceed with the proof of Theorem\nobreakspace \ref {thm:main-2}.  
\begin{proof}[Proof of Theorem\nobreakspace \ref {thm:main-2}] 
  Let $\family{F}$ be any family of finite or infinite graphs with connective constant $\Delta$
  and maximum degree $d+1$.  We prove the result for any fixed
  $\lambda$ such that $\nu_\lambda(d)\Delta = 1 - \epsilon$ for some
  fixed $\epsilon > 0$.  The result will then follow for all $\lambda'
  \leq \lambda$, since by item\nobreakspace \ref {item:3} in Lemma\nobreakspace \ref {lem:tau-props} we
  then have $\nu_{\lambda'}(d)\Delta \leq 1 - \epsilon$ for every
  such $\lambda'$.

  We first prove that the hard core model with these parameters
  exhibits strong spatial mixing on this family of graphs. Let $G$ be
  any graph from $\family{F}$, $v$ any vertex in $G$, and consider
  any boundary conditions $\sigma$ and $\tau$ on $G$ which differ only
  at a distance of at least $\ell$ from $v$.  We consider the Weitz
  self-avoiding walk tree $\Ts{v, G}$
  rooted at $v$ (as defined in Section~\ref{sec:connective-constant}).
   As before, we denote again by $\sigma$ (respectively, $\tau$)
   the translation of the boundary condition $\sigma$ (respectively,
   $\tau)$ on $G$ to $\Ts{v, G}$.  From Weitz's theorem, we then have that
  $R_v(\sigma, G) = R_v(\sigma, \Ts{v, G})$ (respectively,
  $R_v(\tau, G) = R_v(\tau, \Ts{v, G})$).  

  Consider first the case where
  $G$ is infinite.  Let $C_\ell$ denote the cutset in $\Ts{v, G}$ consisting of all
  vertices at distance $\ell$ from $v$. Since $G$ has connective constant at most $\Delta$,
  it follows that for $\ell$ large enough, we have $|C_\ell| \leq
  \Delta^\ell(1-\epsilon/2)^{-\ell}$.  Notice that the maximum degree of
  $\Ts{v, G}$ is $d+1$, and hence every vertex except for the root has
  arity at most $d$.  In the notation of Lemma\nobreakspace \ref {lem:general-tree},
  $\nu_\lambda(d)\Delta = 1-\epsilon$ then implies that $\alpha \leq
  (1-\epsilon)/\Delta$.  Further, since $\nu$ is increasing in $d$, we
  also have $\alpha' \defeq \inf_{d \geq 1} \nu(d) = \nu(1) > 0$.  Applying Lemma\nobreakspace \ref {lem:general-tree} we then get
\ifbool{csub}{%
  \begin{align*}
    \abs{R_v(\sigma, G) - R_v(\tau, G)}^2 &= \left|R_v(\sigma, \Ts{v,G})\right.\\
      &\qquad\qquad\left. - R_v(\tau, \Ts{v,G})\right|^2\\
    &\leq \frac{c_0}{\nu_\lambda(1)}\nu_\lambda(d_v)\sum_{u \in
      C_\ell}\inp{\frac{1-\epsilon}{\Delta}}^\ell\\
    &\leq \frac{c_0}{\nu_\lambda(1)}\nu_\lambda(d+1)
    \inp{\frac{1-\epsilon}{1-\epsilon/2}}^\ell,
  \end{align*}%
  where, in the second line, $d_v$ denotes the degree of the vertex
  $v$ in $G$, and in the third line we use $|C_\ell| \leq
  \Delta^\ell(1-\epsilon/2)^{-\ell}$ and $\nu_\lambda(d_v) \leq
  \nu_\lambda(d+1)$. This establishes strong spatial mixing in $G$, since
  $1-\epsilon < 1-\epsilon/2$. 

}{%
  \begin{align*}
    \abs{R_v(\sigma, G) - R_v(\tau, G)}^2 &= \abs{R_v(\sigma, \Ts{v,G})
      - R_v(\tau, \Ts{v,G})}^2\\
    &\leq \frac{c_0}{\nu_\lambda(1)}\nu_\lambda(d_v)\sum_{u \in
      C_\ell}\inp{\frac{1-\epsilon}{\Delta}}^\ell\text{, where $d_v$
      is
      the degree of $v$ in $G$}\\
    &\leq \frac{c_0}{\nu_\lambda(1)}\nu_\lambda(d+1)
    \inp{\frac{1-\epsilon}{1-\epsilon/2}}^\ell,\\
    &\qquad\text{ using $|C_\ell| \leq \Delta^\ell(1-\epsilon/2)^{-\ell}$ and
      $\nu_\lambda(d_v) \leq \nu_\lambda(d+1)$,}
  \end{align*}%
  which establishes strong spatial mixing in $G$, since
  $1-\epsilon < 1-\epsilon/2$. 
}

  We now consider the case when $\family{F}$ is a family of finite graphs, and
  $G$ is a graph from $\family{F}$ of $n$ vertices.  Since the connective constant
  of the family is $\Delta$, there exist constants $a$ and $c$ such that for
  $\ell \geq a \log n$, $\sum_{i=1}^\ell N(v, \ell) \leq c\Delta^\ell$.  We now proceed
  with the same argument as in the infinite case, but choosing $\ell
  \geq a \log n$.  The cutset $C_\ell$ is again chosen to be the set
  of all vertices at distance $\ell$ from $v$ in $\Ts{v,G}$, so that
  $|C_\ell| \leq c\Delta^{\ell}$.  As before, we then have for $\ell >
  a \log n$,
\ifbool{csub}{%
  \begin{align}
    \abs{R_v(\sigma, G) - R_v(\tau, G)}^2 &= \left|R_v(\sigma,
      \Ts{v,G})\right.\nonumber\\
     &\qquad\qquad \left.- R_v(\tau, \Ts{v,G})\right|^2\nonumber\\
    &\leq \frac{c_0}{\nu_\lambda(1)}\nu_\lambda(d_v)\sum_{u \in
      C_\ell}\inp{\frac{1-\epsilon}{\Delta}}^\ell\nonumber\\
    &\leq \frac{cc_0}{\nu_\lambda(1)}\nu_\lambda(d\!+\!1)\inp{1\!-\!\epsilon}^{\ell},\label{eq:12}
  \end{align}%
  where, in the second relation, $d_v$ is the degree of the vertex $v$
  in $G$, and in the third relation we use $|C_\ell| \leq c\Delta^\ell$
  and $\nu_\lambda(d_v) \leq \nu_\lambda(d+1)$.  This establishes the
  requisite strong spatial mixing bound.
}{%
  \begin{align}
    \abs{R_v(\sigma, G) - R_v(\tau, G)}^2 &= \abs{R_v(\sigma, \Ts{v,G})
      - R_v(\tau, \Ts{v,G})}^2\nonumber\\
    &\leq \frac{c_0}{\nu_\lambda(1)}\nu_\lambda(d_v)\sum_{u \in
      C_\ell}\inp{\frac{1-\epsilon}{\Delta}}^\ell\text{, where $d_v$
      is
      the degree of $v$ in $G$}\nonumber\\
    &\leq c\frac{c_0}{\nu_\lambda(1)}\nu_\lambda(d+1)\inp{1-\epsilon}^{\ell}
    \text{, using $|C_\ell| \leq c\Delta^\ell$ and $\nu_\lambda(d_v) \leq
      \nu_\lambda(d+1)$,}\label{eq:12}
  \end{align}%
  which establishes the requisite strong spatial mixing bound. 
}

  In order to prove the algorithmic part, we first recall an observation of
  Weitz~\cite{Weitz06CountUptoThreshold} that an FPTAS for the
  ``non-occupation'' probabilities $1 - p_v$ under arbitrary boundary
  conditions is sufficient to derive an FPTAS for the partition
  function.  We further note that if the vertex $v$ is not already
  fixed by a boundary condition, then  $1-p_v = \frac{1}{1 + R_v} \geq
  \frac{1}{1+\lambda}$, since $R_v$ lies in the interval $[0,\lambda]$
  for any such vertex. Hence, an additive approximation to $R_v$ with
  error $\mu$ implies a multiplicative approximation to $1-p_v$ within
  a factor of $1\pm \mu(1+\lambda)$.  Thus, an algorithm that
  produces in time polynomial in $n$ and $1/\mu$ an \emph{additive}
  approximation to $R_v$ with error at most $\mu$ immediately gives an
  FPTAS for $1-p_v$, and hence, by Weitz's observation, also for the
  partition function. %
  To derive such an algorithm, we again use the tree $\Ts{v, G}$ considered
  above.  Suppose we require an additive approximation with error at
  most $\mu$ to $R_v(\sigma, G) = R_v(\sigma, \Ts{v, G})$.  We notice
  first that $R_v = 0$ if and only if there is a neighbor of $v$ that
  is fixed to be occupied in the boundary condition $\sigma$.  In this
  case, we simply return $0$.  Otherwise, we expand $\Ts{v, G}$ up to
  depth $\ell$ for some $\ell \geq a\log n$ to be specified later.
  Notice that this subtree can be explored in time
  $O\inp{\sum_{i=1}^\ell N(v,i)}$ which is $O(\Delta^\ell)$ since the
  connective constant is at most $\Delta$.

  We
  now consider two extreme boundary conditions $\sigma_+$ and
  $\sigma_-$ on $C_\ell$: in $\sigma_+$ (respectively, $\sigma_-$) all
  vertices in $C_\ell$ that are not already fixed by $\sigma$ are
  fixed to ``occupied'' (respectively, unoccupied).  The form of the
  recurrence ensures that the true value $R_v(\sigma)$ lies between
  the values $R_v(\sigma_+)$ and $R_v(\sigma_-)$.  We compute the
  recurrence for both these boundary conditions on the tree.  The
  analysis leading to eq.\nobreakspace \textup {(\ref {eq:12})}
  ensures that, since $\ell \geq a \log n$, we have
  \begin{displaymath}
    \abs{R_v(\sigma_+, G) - R_v(\sigma_-, G)} \leq M_1\exp(-M_2\ell)
  \end{displaymath}
  for some fixed positive constants $M_1$ and $M_2$. Now, assume
  without loss of generality that $R_v(\sigma_+) \geq R_v(\sigma_-)$.
  By the preceding observations, we then have $$R_v(\sigma) \leq
  R_v(\sigma_+) \leq R_v(\sigma) + M_1\exp(-M_2\ell).$$ By choosing
  $\ell = a\log n + O(1) +O(\log (1/\mu))$, we get the required
  $\pm\mu$ approximation.  Further, by the observation above, the
  algorithm runs in time $O\inp{\Delta^{\ell}}$, which is polynomial in
  $n$ and $1/\mu$ as required.
\end{proof}

We now prove Theorem\nobreakspace \ref {thm:main-1}.  
\begin{proof}[Proof of Theorem\nobreakspace \ref {thm:main-1}]
  The spatial mixing part of the Theorem follows directly from
  Theorem\nobreakspace \ref {thm:main-2} once we prove that 
  $\lambda < \lambda_c(\Delta+1)$ implies that $\sup_{d\geq 1}\nu_\lambda(d)\Delta < 1$.  We now proceed with this verification.  For
  ease of notation, we will denote $\tilde{x}_\lambda(d)$ (as
  defined by eq.~\eqref{eq:6}) by $\tilde{x}$ in the proof.

  Let $D$ be such that $\lambda = \lambda_c(D)$.  Since
  $\lambda < \lambda_c(\Delta + 1)$, we have $D > \Delta + 1 > 1$, and
  hence there exists an $\epsilon > 0$ such that  $\nu_\lambda(d)\Delta \leq \frac{1}{1+\epsilon}\nu_\lambda(d)(D-1)$ for all $d
  \geq 1$.  Thus, in order to establish that $\sup_{d\geq
    1}\nu_\lambda(d)\Delta < 1$, we only need to show that $\nu_\lambda(d)(D-1) < 1$
  for all $d \geq 1$. 
  Using the definition $\nu_\lambda(d) =
  \frac{d\tilde{x} -1}{1+\tilde{x}}$, this translates into the
\ifbool{csub}{%
  requirement  $(d(D-1) - 1)\tilde{x} < D$ for $\tilde{x}$.
}{%
  following requirement for $\tilde{x}$:
    \begin{displaymath}
      (d(D-1) - 1)\tilde{x} < D.
    \end{displaymath}
} %
    Now, the definition of $\tilde{x}$ implies that $\tilde{x} \geq
    1/d > 0$, so this condition is trivially satisfied if
    $d(D-1) \leq 1$.  Hence, we assume from now on that $d(D-1) - 1 >
    0$.  In this case, the above condition translates to 
    \begin{displaymath}
      \tilde{x} < x^\star \defeq \frac{D}{d(D-1) - 1}.
    \end{displaymath}
    We now notice that eq.~(\ref{eq:6}) for $\tilde{x}$ can be written as
    $P_{d,D}(\tilde{x}) = 0$, where $P_{d,D}(x) \defeq dx - 1 -
    f_{d,\lambda_c(D)}(x)$ is a strictly increasing function of $x$.
    Thus the above requirement translates to $P_{d,D}(x^\star) > 0$,
    which simplifies to the requirement
\ifbool{csub}{%
    \begin{equation}
      \label{eq:14}
      \insq{1 + \frac{D}{d(D\!-\!1) - 1}}^{d\!+\!1} >
      \inp{\frac{D}{D\!-\!1}}^D\text{, for all $d \geq 1$}.
    \end{equation} %
}{%
    \begin{equation}
      \label{eq:14}
      \insq{1 + \frac{D}{d(D-1) - 1}}^{d+1} >
      \inp{\frac{D}{D-1}}^D\quad\text{ for all $d \geq 1$}.
    \end{equation} %
} %
    Since the left hand side of eq.\nobreakspace \textup {(\ref {eq:14})} is a
    decreasing function of $d$, we only need to verify the
    condition in the limit $d \rightarrow \infty$.  This is equivalent
    to the condition
\ifbool{csub}{%
    $e^{{D}/\inp{D-1}} > {D^D}/{(D-1)^D}$, %
}{%
    \begin{equation*}
      e^{\frac{D}{D-1}} > \inp{\frac{D}{D-1}}^D,
    \end{equation*} %
}%
    which in turn is equivalent to the inequality $e^{1/(D-1)} >
    1 + \nfrac{1}{(D-1)}$, and the latter is true for all $D > 1$.  Thus we
    see that whenever $\Delta + 1 < D$, or equivalently, when
    $\lambda_c(D) = \lambda < \lambda_c(\Delta + 1)$, we have
    $\nu_\lambda(d)\Delta < 1$.

    In order to prove the claim that $\lambda_c(\Delta+1) \geq
    \frac{e}{\Delta}$ in the remark following
    Theorem~\ref{thm:main-1}, we notice that
    \begin{displaymath}
      \lambda_c(\Delta+1) =
      \frac{1}{\Delta}\inp{1+\frac{1}{\Delta}}^{\Delta+1} \geq \frac{e}{\Delta},
    \end{displaymath}
    since $(1+1/x)^{x+1} > e$ for $x \geq 0$. 
    
    The proof of the algorithmic part of the theorem is identical to
    the proof of the algorithmic part of Theorem\nobreakspace \ref
    {thm:main-2}. As before, the depth $\ell$ up to which the
    self-avoiding walk tree needs to be explored in order to 
    additively approximate $R_v$ up to an error of at most $\mu$ is
    $a\log n + O(\log (1/\mu)) = O(\log n + \log (1/\mu))$, and
    hence the running time $O\inp{\Delta^{\ell}}$ is still polynomial in
    $n$ and $1/\mu$.
\end{proof}

Finally, we apply Theorem~\ref{thm:main-1} to the random graph model ${\cal G}(n, d/n)$, and prove
Theorem~\ref{thm:gndn-ssm-alg}.  We first restate
Theorem~\ref{thm:gndn-ssm-alg} to make the statements about the
various high probability events more precise. 
{
\begingroup
\def\thetheorem{\ref*{thm:gndn-ssm-alg}}
\begin{theorem}\textbf{\emph{(Restated).}}
  Let $\epsilon$, $\beta > 0$, $d > 1$ and $\lambda <
  \frac{e}{d(1+\epsilon)}$ be fixed. Then:
  \begin{itemize}
  \item The hard core model with activity $\lambda$ on a graph $G$
    drawn from $\G(n,d/n)$ exhibits strong spatial mixing with
    probability at least $1 - n^{-\beta}$. 
  \item There exists a deterministic algorithm which, on input $\mu >
    0$, approximates the partition function of the hard core model
    with activity $\lambda > 0$ on a graph $G$ drawn from $\G(n, d/n)$
    within a multiplicative factor of $(1 \pm \mu)$, and which runs in
    time polynomial in $n$ and $1/\mu$ with probability at least $1 -
    n^{-\beta}$ (over the random choice of the
    graph).
  \end{itemize}
\end{theorem}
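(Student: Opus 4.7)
My plan is to deduce Theorem~\ref{thm:gndn-ssm-alg} from Theorem~\ref{thm:main-1} by showing that a graph $G$ drawn from $\G(n, d/n)$ has connective constant at most $d(1+\epsilon)$ with probability at least $1 - n^{-\beta}$, in the sense of the finite-family definition in Section~\ref{sec:connective-constant}. Once this is established, both conclusions follow immediately: by the remark after Theorem~\ref{thm:main-1}, $\lambda < e/(d(1+\epsilon))$ already implies $\lambda < \lambda_c(d(1+\epsilon)+1)$, so Theorem~\ref{thm:main-1} applied with $\Delta = d(1+\epsilon)$ yields strong spatial mixing and the deterministic FPTAS on the high-probability event.

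To bound $\mathbb{E}\insq{N(v,\ell)}$, I observe that every self-avoiding walk of length $\ell$ starting at $v$ corresponds to an ordered sequence of $\ell$ distinct non-root vertices (at most $n^\ell$ such sequences), and each such sequence traces an actual path in $G$ with probability exactly $(d/n)^\ell$. Hence $\mathbb{E}\insq{N(v,\ell)} \leq d^\ell$, and summing the geometric series gives $\mathbb{E}\insq{\sum_{i=1}^\ell N(v,i)} \leq c_0 d^\ell$ for a constant $c_0$ depending only on $d > 1$. Markov's inequality then yields
\begin{equation*}
  \Pr{\sum_{i=1}^\ell N(v,i) \geq c_0 \inp{d(1+\epsilon)}^\ell} \leq (1+\epsilon)^{-\ell},
\end{equation*}
and a union bound over the $n$ choices of $v$ and over all integers $\ell \geq a\log n$ gives a total failure probability of at most $O\inp{n^{1 - a\log(1+\epsilon)}}$. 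Choosing $a$ so that $a\log(1+\epsilon) \geq 1 + \beta$ makes this smaller than $n^{-\beta}$ for large $n$, which establishes the connective-constant bound uniformly in $v$ and in $\ell \geq a\log n$.

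With this bound in hand, the strong spatial mixing claim follows directly from the first conclusion of Theorem~\ref{thm:main-1}. For the algorithmic claim, the deterministic algorithm from the proof of Theorem~\ref{thm:main-1} can be used verbatim: it explores the Weitz self-avoiding walk tree to depth $\ell = O(\log n + \log(1/\mu))$, computes the two extreme boundary-condition recurrences, and outputs an additive $\pm\mu$ approximation to $R_v$, which via Weitz's reduction yields a $(1 \pm \mu)$-approximation to the partition function. Its running time $O(\Delta^\ell)$ is polynomial in $n$ and $1/\mu$ precisely on the high-probability event that $G$ has connective constant $O(1)$; the algorithm itself does not need to know or certify this event.

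The main obstacle is largely bookkeeping: one must union bound over both $v$ and $\ell$, rather than merely over vertices, because the finite-family definition of the connective constant demands the walk-count bound uniformly for every $\ell \geq a\log n$. Fortunately, the geometric decay $(1+\epsilon)^{-\ell}$ more than absorbs the extra factor of $n$ from summing over $\ell$, so the value of $a$ required is modest and does not affect the asymptotic running time. Beyond this, the only subtlety is to verify that the concentration of $N(v,\ell)$ is strong enough at the scale $\ell = \Theta(\log n)$ that we need, which the first-moment Markov bound handles comfortably for any $\lambda$ strictly below the conjectured threshold $e/d$.
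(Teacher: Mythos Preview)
Your proposal is correct and follows essentially the same approach as the paper: bound $\E{\sum_{i=1}^\ell N(v,i)}$ by a geometric series, apply Markov's inequality, union bound over vertices and over $\ell \geq a\log n$, and then invoke Theorem~\ref{thm:main-1} for both the spatial mixing and the algorithmic conclusion. The only cosmetic differences are that the paper bounds the connective constant by $d(1+\epsilon/2)$ (leaving a strict gap to $e/(d(1+\epsilon))$) and uses the cruder union bound over $\ell \le n$ rather than summing the geometric tail, but neither affects the argument.
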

\addtocounter{theorem}{-1}
\endgroup

}
\begin{proof}[Proof]
  Both parts of the Theorem follow easily from Theorem~\ref{thm:main-1} once we
  prove that graphs drawn from ${\cal G}(n, d/n)$ have connective
  constant at most $d(1+\epsilon/2)$ with probability at least $1 - n^{-\beta}$. 

  Recall that $N(v,\ell)$ is the number of self-avoiding walks of
  length $\ell$ starting at $v$.  Suppose $\ell \geq a\log n$, where $a$ is a constant
  depending upon the parameters $\epsilon$, $\beta$ and $d$ which will be specified later.  We first observe that
  \begin{displaymath}
    \E{\sum_{i=1}^\ell N(v,i)} \leq \sum_{i=1}^\ell\inp{\frac{d}{n}}^i n^i
    \leq d^{\ell}\frac{d}{d-1},
  \end{displaymath}
  and hence by Markov's inequality, we have $\sum_{i=1}^\ell N(v,i) \leq
  d^\ell\frac{d}{d-1}(1+\epsilon/2)^\ell$ with probability at least $1 -
  (1+\epsilon/2)^{-\ell}$.  By choosing $a$ such that
  $a\log(1+\epsilon/2) \geq \beta + 2$, we see that this probability
  is at least $1 - n^{-\inp{\beta+2}}$.  By taking a union bound over all
  $\ell$ with $a\log n \leq \ell \leq n$ and over all vertices $v$, we
  see that the connective constant $\Delta$ is at most
  $d(1+\epsilon/2)$ with probability at least $1-n^{-\beta}$.  Since
  $\lambda \leq \frac{e}{d(1+\epsilon)} < \frac{e}{d(1+\epsilon/2)}$,
  we therefore see that with probability at least $1-n^{-\beta}$, the
  conditions of the first part of Theorem~\ref{thm:main-1} are
  satisfied, and the graph sampled from ${\cal G}(n, d/n)$ exhibits
  strong spatial mixing.  This proves the part about strong spatial mixing.
  
  The algorithmic part is proved using the same algorithm as in the
  proofs of Theorems~\ref{thm:main-1} and
  \ref{thm:main-2}.  Provided the above bound on the connective
  constant holds, that algorithm will terminate
  in time $n^\gamma\poly{1/\mu}$ and produce an estimate of the partition function
  accurate up to a factor of $1 \pm \mu$, where $\gamma$ is a
  constant dependent only upon $\beta$, $\lambda$ and
  $\epsilon$.  Since the required connective constant bound holds with
  probability at least $1-n^{-\beta}$, we therefore see that the algorithm terminates in time $n^\gamma\poly{1/\mu}$ with
  at least this probability. %
\end{proof}

\section{More results on trees}
\label{sec:more-results-trees}
In this section, we improve upon the bounds in
Theorem~\ref{thm:main-1} in the special case of spherically symmetric
trees and then present an application of this improvement to the
problem of counting independent sets in bounded degree bipartite
graphs.  We also show that in the case of general trees,
uniqueness of the Gibbs measure for the hard core model can be related to
the \emph{branching factor}: a slightly stronger notion than the
connective constant that
has been used before in the context of the zero field Ising, Potts and
Heisenberg models on trees~\cite{lyons_ising_1989,pemantle_robust_1999}.

\subsection{Spherically symmetric trees}
\label{sec:spher-symm-trees}
We first consider improvements on the bounds of
Theorem~\ref{thm:main-1} in the special case of spherically symmetric
trees.  Recall that a rooted tree is called \emph{spherically
  symmetric} if the degree of any vertex in the tree depends only upon
its distance from the root.

We now consider an infinite spherically symmetric subtree rooted at
$\rho$.  Let $d_i$ denote the arity of vertices at distance $i$ from
the root.  We then have $N(\rho, \ell) = \prod_{i=0}^{\ell-1}d_i$.  The
\emph{connective constant with respect to $\rho$}, denoted by
$\Delta_\rho$, is defined as
$\Delta_\rho=\limsup_{\ell\rightarrow\infty}N(\rho,\ell)^{1/\ell}$.  Notice
that this number is always \emph{at most} the true connective
constant, which is the supremum of this quantity over all
vertices in the tree.

\begin{theorem}\label{thm:mixing-spherically-semmetric}
Let $T$ be a locally finite spherically symmetric tree rooted at
$\rho$, whose connective constant with respect to $\rho$ is
$\Delta_\rho$. If $\lambda<\lambda_c(\Delta_\rho)$, then for any two boundary conditions $\sigma$ and $\tau$ on $T$ which differ only at depth at least $\ell$ in $T$, we have
\[
\abs{R_\rho(\sigma, T) - R_\rho(\tau, T)} = \exp(-\Omega(\ell)).
\]
\end{theorem}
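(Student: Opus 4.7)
The plan is to combine the message framework of Section~\ref{sec:messages-tree} with the spherical symmetry of the tree, exploiting the fact that all vertices at a given depth share the same arity.

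First, I would apply the message $\phi(x) = \sinh^{-1}\sqrt{x}$ from Section~\ref{sec:special-message} together with the $q = 2$ bound of Lemma~\ref{lem:tech}, iterated level by level. Because the tree is spherically symmetric, the contraction factor at depth $i$ is exactly $\nu_\lambda(d_i)$, so telescoping the $\ell_2$ bound yields
\begin{equation*}
  \abs{\phi(R_\rho(\sigma)) - \phi(R_\rho(\tau))}^2 \leq \inp{\prod_{i=0}^{\ell-1}\nu_\lambda(d_i)} \sum_{v\,:\,\abs{v}=\ell}\abs{\phi(R_v(\sigma)) - \phi(R_v(\tau))}^2.
\end{equation*}
Since each term in the sum is bounded by a constant depending only on $\lambda$ and the number of depth-$\ell$ vertices equals $N(\rho,\ell) = \prod_{i=0}^{\ell-1}d_i$, the right-hand side is at most $c_1\prod_{i=0}^{\ell-1} d_i\nu_\lambda(d_i)$. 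Because $\phi$ has a continuously differentiable inverse on the range of occupation ratios, exponential decay of this product translates directly into the bound $\abs{R_\rho(\sigma) - R_\rho(\tau)} = \exp(-\Omega(\ell))$.

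The crux is then to show that $\lambda < \lambda_c(\Delta_\rho)$ implies $\prod_{i=0}^{\ell-1}d_i\nu_\lambda(d_i)$ decays exponentially. I would extend $\nu_\lambda$ to real $d \geq 1$ via its defining fixed-point equation~(\ref{eq:6}) and establish two structural facts: (i) $d \mapsto d\nu_\lambda(d)$ is strictly increasing in $d$; and (ii) $s \mapsto \log\inp{e^s\nu_\lambda(e^s)}$ is concave, i.e.\ $d\nu_\lambda(d)$ is log-concave in $\log d$. Given these, Jensen's inequality applied to the values $\log d_0,\ldots,\log d_{\ell-1}$ gives
\begin{equation*}
  \prod_{i=0}^{\ell-1} d_i\nu_\lambda(d_i) \leq \inp{\bar{d}\,\nu_\lambda(\bar{d})}^\ell, \qquad \bar{d} \defeq \inp{\prod_{i=0}^{\ell-1}d_i}^{1/\ell}.
\end{equation*}
The definition of $\Delta_\rho$ gives $\bar{d} \leq \Delta_\rho + o(1)$, and combining with the monotonicity in (i) and the identity $\Delta_\rho\nu_\lambda(\Delta_\rho) < 1 \iff \lambda < \lambda_c(\Delta_\rho)$ (item~\ref{item:1} of Lemma~\ref{lem:tau-props}) yields the required exponential decay.

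The main obstacle is the log-concavity claim in (ii). I would attack it by implicit differentiation of the fixed-point equation $d\tilde{x} = 1 + \lambda/(1+\tilde{x})^d$ together with the identity $d\nu_\lambda(d) = d\lambda/(1+\tilde{x}_\lambda(d))^{d+1}$, reducing the problem to a one-variable calculus inequality in $\tilde{x}$ and $d$. If this direct route is too delicate, a safer fallback is to partition the depths into blocks on which $d_i$ is approximately constant and to handle each block separately using a weaker monotonicity bound on $\nu_\lambda$, which is sufficient for the same qualitative conclusion though possibly with a slightly worse exponential rate.
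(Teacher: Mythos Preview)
Your proposal is correct and follows essentially the same route as the paper: the paper defines $\chi(d)=d\nu_\lambda(d)$, proves precisely your claim (ii) as Lemma~\ref{lem:concavity} (that $x\mapsto\log(e^x\nu_\lambda(e^x))$ is concave) via implicit differentiation of eq.~(\ref{eq:6}), and then uses Jensen together with the monotonicity of $\chi$ and the identity $\chi(\Delta_\rho)<1\iff\lambda<\lambda_c(\Delta_\rho)$ to conclude. The only organizational difference is that the paper folds the Jensen step into an induction (Lemma~\ref{lem:symm-tree}, via eq.~(\ref{eq:21})) rather than telescoping first and applying Jensen once, and it stops the recursion one level above the cutset so that the leaf terms are guaranteed to lie in $[0,\lambda]$---a minor bookkeeping point you should also observe.
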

\begin{remark}
Note that Theorem~\ref{thm:mixing-spherically-semmetric} implies that on a spherically
symmetric tree with connective constant $\Delta$ ``as observed from the
root'', the correlation between the state of the root and a set of
vertices at a distance $\ell$ from the root decays exponentially
whenever $\lambda < \lambda_c(\Delta)$, and this rate of decay holds
irrespective of any fixed boundary conditions.  This
bound is tight as a function of $\Delta$, and improves upon the bound $\lambda <
\lambda_c(\Delta+1)$ obtained from a direct application of
Theorem~\ref{thm:main-1}.  Note, however, that this does not show
strong spatial mixing as we have defined it, since the decay of
correlations is shown only for the root.  Nevertheless, as we show in
Corollary~\ref{thm:bipartite-graphs} below,
this version of strong spatial mixing can still be applied to derive an FPTAS for the estimation
of the partition function of the hard core model on bounded degree bipartite graphs
for a range of activities larger than that promised by Weitz's
result~\cite{Weitz06CountUptoThreshold}, or our
Theorems~\ref{thm:main-1} and \ref{thm:main-2}. 
\end{remark}

We begin by proving a convexity condition on the function $\nu$
defined in Lemma~\ref{lem:tau-props} with respect to the message
$\phi$ defined in eq. (\ref{eq:4}); this condition will be a crucial
ingredient in the proofs in this section. Throughout, we
consider the hard core model with some fixed vertex activity $\lambda>
0$.  Since the potential function $\phi$ and the vertex activity
$\lambda$ are always going to be clear from the context, we will drop
the corresponding suffixes for ease of notation.
\begin{lemma}\label{lem:concavity}
  The function $H(x) \defeq\log(e^x\nu(e^x))$ is
  concave in the interval $[0,\infty)$.
\end{lemma}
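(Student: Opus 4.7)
The plan is to exploit a clean parametric form for $\nu_\lambda(d)$. Substituting $y = 1 + \tilde{x}_\lambda(d)$ in the fixed point equation \eqref{eq:6} yields $dy - d - 1 = \lambda y^{-d}$, and then
\begin{equation*}
\nu_\lambda(d) = \frac{d\tilde x - 1}{1+\tilde x} = \frac{d(y-1)-1}{y} = \frac{\lambda y^{-d}}{y} = \lambda y^{-(d+1)}.
\end{equation*}
In particular, writing $d = e^x$ and viewing $y = y(d)$ as determined implicitly by the above equation,
\begin{equation*}
H(x) = x + \log\lambda - (d+1)\log y.
\end{equation*}
Note also that $y>1$ throughout, since $\tilde x > 0$ for every $d$ and $\lambda > 0$.

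Next I will compute $H'(x)$. A direct differentiation in $x$, together with $dd/dx = d$, gives
\begin{equation*}
H'(x) = 1 - d\log y - d(d+1)\tfrac{y'}{y},
\end{equation*}
where $y' = dy/dd$. To evaluate $y'$ I differentiate $dy - d - 1 = \lambda y^{-d}$ in $d$ and solve, using the identity $y + dy - d - 1 = (d+1)(y-1)$ and the relation $\lambda y^{-d} = dy - d - 1$ to simplify. After collecting terms one obtains
\begin{equation*}
\frac{y'}{y} = -\frac{1}{d(d+1)} - \frac{(dy-d-1)\log y}{d(d+1)(y-1)},
\end{equation*}
which in particular shows $y' < 0$ (all bracketed quantities are positive when $y>1$). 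Substituting this into $H'(x)$, the leading $-d\log y$ cancels after noting that $\tfrac{dy-d-1}{y-1} - d = -\tfrac{1}{y-1}$, and everything collapses to the remarkably clean expression
\begin{equation*}
H'(x) = 2 - \frac{\log y}{y-1}.
\end{equation*}
This algebraic simplification is the main technical step; once it is in hand, the rest of the argument is essentially a one-variable monotonicity check.

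It now suffices to show that $H'(x)$ is non-increasing in $x$. Since $d = e^x$ is increasing in $x$, and we already have $y'(d) < 0$, this reduces to showing that $\log y/(y-1)$ is a decreasing function of $y$ on $(1,\infty)$. A direct differentiation gives
\begin{equation*}
\frac{d}{dy}\!\left[\frac{\log y}{y-1}\right] = -\frac{y\log y - y + 1}{y(y-1)^2},
\end{equation*}
and the numerator $y\log y - y + 1$ vanishes at $y=1$ with derivative $\log y \geq 0$ for $y\geq 1$, so it is non-negative on $(1,\infty)$. Hence $\log y/(y-1)$ is decreasing in $y$, $y$ is decreasing in $d = e^x$, so $\log y/(y-1)$ is increasing in $x$, and therefore $H'(x) = 2 - \log y/(y-1)$ is decreasing in $x$. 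This establishes concavity of $H$ on $[0,\infty)$.

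I expect the main obstacle to be the algebraic collapse in paragraph two — writing $\nu(d) = \lambda y^{-(d+1)}$ and pushing the fixed point equation through implicit differentiation at just the right points to reveal the clean identity $H'(x) = 2 - \log y/(y-1)$. The remaining monotonicity of $\log y/(y-1)$ and the sign of $y'$ are then standard one-line verifications.
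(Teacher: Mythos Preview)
Your proof is correct and is genuinely different from (and arguably more elegant than) the paper's argument. The paper proceeds by writing $H''(x) = e^x L(e^x)\bigl(1 + e^x L'(e^x)/L(e^x)\bigr)$ with $L(d) = \nu'(d)/\nu(d)$, uses the earlier computation $\nu'(d) = \Xi^{(1,0)}(d,\tilde x)$ to obtain $L(d) = \frac{1}{d\tilde x}\bigl[\tilde x - \log(1+\tilde x)\bigr]$, and then computes $\tilde x'(d)$ by implicit differentiation to verify $dL'(d)/L(d) \leq -1$. Your approach instead observes the parametric identity $\nu(d) = \lambda y^{-(d+1)}$ with $y = 1+\tilde x$, carries out a single implicit differentiation, and arrives at the closed form $H'(x) = 2 - \frac{\log y}{y-1}$; concavity then reduces to two elementary monotonicity facts, namely $y'(d) < 0$ and the decrease of $\log y/(y-1)$ on $(1,\infty)$. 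Both arguments rest on implicitly differentiating the fixed point relation, but your parametrization yields a strikingly clean first derivative that makes the second derivative computation unnecessary, whereas the paper trades this for reusing a derivative formula already established in Lemma~\ref{lem:tau-props}.
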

\noindent The proof of this lemma is somewhat technical and can be
found in Appendix~\ref{sec:proof-lemma-refl-1}.

To take advantage of Lemma~\ref{lem:concavity}, we define the function $\chi(d) \defeq d\cdot\nu(d)$.
Lemma~\ref{lem:concavity} along with Jensen's inequality then implies
that, for any $d_i \geq 1$ and non-negative constants $\beta_i$
summing up to $1$, we have
\begin{equation}
  \prod_{i=0}^{\ell-1}\chi\inp{d_i}^{\beta_i} \leq 
  \chi\inp{\prod_{i=0}^{\ell-1}d_i^{\beta_i}}.\label{eq:20}
\end{equation}
We now prove the following analog of Lemma~\ref{lem:general-tree}
for the case of spherically symmetric trees. %
In addition to the notation used in
Lemma~\ref{lem:general-tree}, we will also need the following: for $j
\leq \ell$ we define $\Delta_{j,\ell} \defeq (\prod_{i=j}^{\ell-1}
d_i)^{1/(\ell-j)}$ (with the convention that $\Delta_{\ell, \ell} =
1$).  
Intuitively, $\Delta_{j,\ell}$ gives an estimate of the average arity
of the subtree of depth $\ell-j$ rooted at a vertex at depth $j$.
Thus, for example, $N(\rho, \ell) = \Delta_{0,\ell}^\ell$.
Further, using eq.~(\ref{eq:20}), we see that for $j \leq \ell - 1$,
\begin{equation}
  \label{eq:21}
  \chi\inp{d_j}\chi\inp{\Delta_{j+1,\ell}}^{\ell-j-1} \leq \chi\inp{\Delta_{j,\ell}}^{\ell-j}.
\end{equation}
\begin{lemma}\label{lem:symm-tree}
  Let $T$ be a locally finite spherically symmetric tree rooted at
  $\rho$. For $\ell \geq 1$, let $C_\ell$ be the cutset consisting of
  the vertices at distance exactly $\ell$ from $\rho$. Let $C' =
  C_{\ell +1}$ be the cutset comprising the children of vertices in
  $C_\ell$.  Consider two arbitrary boundary conditions $\sigma$ and
  $\tau$ on $T_{\leq C'}$ which differ only on $C'$.  With $\chi$ and
  $\Delta_{j,\ell}$ as defined above, we have
  \begin{equation*}
    |R_\rho(\sigma) - R_\rho(\tau)|^2 \leq c_2\cdot\chi\inp{\Delta_{0,\ell}}^\ell,
  \end{equation*}
  where $c_2$ is a constant depending only upon $\lambda$ (and the
  message $\phi$).
\end{lemma}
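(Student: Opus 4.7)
My plan is to track the squared potential error $|\phi(R_v(\sigma))-\phi(R_v(\tau))|^2$ level by level, from the cutset $C_\ell$ up to the root~$\rho$, applying Lemma~\ref{lem:tech} with $q=2$ at each level to aggregate children's errors into their parent's error, and then collapsing the resulting product via the Jensen inequality implied by Lemma~\ref{lem:concavity}. For every vertex $v$ in $T_{\leq C'}$, define $y_v \defeq \phi(R_v(\sigma))$ and $z_v \defeq \phi(R_v(\tau))$. Since $R_v\in[0,\lambda]$ at every non-boundary vertex, and in particular for each $v\in C_\ell$, there is a constant $c_1$ depending only on $\lambda$ and $\phi$ such that $|y_v-z_v|^2\leq c_1$ for all $v\in C_\ell$; this is my base case and sidesteps any issue with infinite boundary values at $C'$.

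At each $u\in C_j$ with $j<\ell$, the tree recurrence reads $y_u=f_{d_j}^\phi(y_{w_1},\ldots,y_{w_{d_j}})$ and analogously for $z_u$, where $w_1,\ldots,w_{d_j}$ are the children of $u$. The observation preceding Lemma~\ref{lem:fixed-point} already verifies that $S_{\phi,2}$ is concave, so Lemma~\ref{lem:tech} applies and yields
\begin{equation*}
|y_u-z_u|^2 \;\leq\; \nu(d_j)\!\sum_{w:\,\text{child of }u}\! |y_w-z_w|^2.
\end{equation*}
Since the children of distinct vertices of $C_j$ partition $C_{j+1}$, summing the above inequality over $u\in C_j$ yields the one-step recurrence $F_j\leq\nu(d_j)\,F_{j+1}$, where $F_j\defeq\sum_{v\in C_j}|y_v-z_v|^2$.

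Iterating this recurrence from $j=0$ to $j=\ell-1$ and using $F_\ell\leq c_1|C_\ell|=c_1\prod_{i=0}^{\ell-1}d_i$ gives
\begin{equation*}
|y_\rho-z_\rho|^2 \;=\; F_0 \;\leq\; c_1\prod_{j=0}^{\ell-1}d_j\,\nu(d_j)\;=\;c_1\prod_{j=0}^{\ell-1}\chi(d_j).
\end{equation*}
Invoking inequality~(\ref{eq:20}) with uniform weights $\beta_j=1/\ell$ then collapses the right-hand side to $c_1\,\chi(\Delta_{0,\ell})^\ell$. Finally, because $\phi^{-1}$ is Lipschitz on the compact interval $[0,\phi(\lambda)]$ with some constant $L=L(\lambda,\phi)$, I conclude $|R_\rho(\sigma)-R_\rho(\tau)|^2\leq L^2|y_\rho-z_\rho|^2\leq c_2\,\chi(\Delta_{0,\ell})^\ell$ with $c_2\defeq L^2c_1$.

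I do not expect any serious obstacle. The reason this aggregation goes through cleanly is that the choice $q=2$ in Lemma~\ref{lem:tech} converts each per-vertex bound into a statement about \emph{sums} (rather than maxima) of squared errors across children, and this additivity is precisely what lets the $d_j$ contributions at a single level combine multiplicatively into $\prod_j\chi(d_j)$. The subtlest ingredient is the concavity-based collapse, which depends on Lemma~\ref{lem:concavity}; it is exactly this step that distinguishes the spherically symmetric setting from the general-tree bound of Lemma~\ref{lem:general-tree} and allows the final estimate to be expressed in terms of the geometric-mean arity $\Delta_{0,\ell}$ rather than the worst-case arity along the tree.
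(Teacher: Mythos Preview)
Your proof is correct and follows essentially the same approach as the paper's: both arguments apply Lemma~\ref{lem:tech} with $q=2$ level by level and then use the concavity from Lemma~\ref{lem:concavity} (via eq.~(\ref{eq:20})) to collapse the product $\prod_j\chi(d_j)$ into $\chi(\Delta_{0,\ell})^\ell$. The only cosmetic difference is that the paper carries out a per-vertex induction and applies the Jensen step incrementally through eq.~(\ref{eq:21}), whereas you track the level sums $F_j$ and apply eq.~(\ref{eq:20}) once at the end; these are equivalent reorganizations of the same computation.
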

\begin{proof}
  The proof is similar in structure to the proof of
  Lemma~\ref{lem:general-tree}. The main difference is the application
  of the more delicate concavity condition provided by
  Lemma~\ref{lem:concavity} and eq.~(\ref{eq:21}) to aggregate the
  decay obtained at the children of a vertex.

  As before, for a vertex $v$ in
  $T_{\leq C'}$, we will denote by $T_v$ the subtree rooted at $v$ and
  containing all the descendants of $v$, and by $R_v(\sigma)$
  (respectively, $R_v(\tau)$) the occupation probability $R_v(\sigma,
  T_v)$ (respectively, $R_v(\tau, T_v))$ of the vertex $v$ in the
  subtree $T_v$ under the boundary condition $\sigma$ (respectively,
  $\tau)$ restricted to $T_v$.  Further, we will denote by $C_v$
  (respectively $C'_v$) the restriction of the cutset $C$
  (respectively, $C'$) to $T_v$.  We consider again the quantities
  \begin{displaymath}
    L \defeq \min_{x \in \insq{0,\lambda}}\phi'(x) = \frac{1}{\sqrt{\lambda(1+\lambda)}}\text{, and } M
    \defeq \phi(\lambda) - \phi(0) = \sinh^{-1}\inp{\sqrt{\lambda}}, 
  \end{displaymath}
  where the latter bounds are based on our specific message $\phi$
  defined in eq.~(\ref{eq:4}).

  As in the proof of Lemma~\ref{lem:general-tree}, we proceed by
  induction on the structure of $T_\rho$.  We will show that for any
  vertex $v$ in $T_\rho$ which is at a distance $j \leq \ell$ from
  $\rho$, and thus has arity $d_j$, we have
  \begin{equation}\label{eq:22}
    |\phi(R_v(\sigma)) - \phi(R_v(\tau))|^2 \leq c_1\chi(\Delta_{j,\ell})^{\ell-j}.
  \end{equation}
  where $c_1 = M^2$. To get the claim of the lemma, we notice that
  since $\rho \not\in C$, the form of the recurrence for the hard core
  model implies that both $R_\rho(\sigma)$ an $R_\rho(\tau)$ are in
  the interval $[0,\lambda]$.  It then follows that $|R_v(\sigma) -
  R_v(\tau)| \leq \frac{1}{L}|\phi(R_v(\sigma)) - \phi(R_v(\tau))|$.
  Hence, taking $v = \rho$ in eq.~(\ref{eq:22}) and then setting $c_2
  = c_1/L^2$, the claim of the lemma follows.

  We now proceed to prove eq.~(\ref{eq:22}). The base case of the
  induction consists of vertices $v$ which are either fixed by a
  boundary condition or which are in $C_\ell$.  In the first case,
  since the vertex is not in $C_\ell$, we have have $R_v(\sigma) =
  R_v(\tau)$ (since $\sigma$ and $\tau$ differ only on $C_{\ell +1}$)
  and hence the claim is trivially true.  In case $v \in C_{\ell}$,
  all the children of $v$ must lie in $C_{\ell + 1}$. The form of the
  recurrence of the hard core model then implies that both
  $R_v(\sigma)$ and $R_v(\tau)$ lie in the interval $[0,\lambda]$, so
  that we have
  \begin{displaymath}
    \abs{\phi(R_v(\sigma)) - \phi(R_v(\tau))}^2 \leq (\phi(\lambda) -
    \phi(0))^2 = M^2\text{, as required}.
  \end{displaymath}

  We now proceed to the inductive case. Consider a vertex $v$ at a
  distance $j \leq \ell - 1$ from $\rho$. Let $v_1, v_2, \ldots
  v_{d_j}$ be the children of $v$, which satisfy eq.~(\ref{eq:22}) by
  induction.  Applying Lemma~\ref{lem:tech} (and noticing that
  $\nu(d) = \xi(d)$ and $d\cdot\nu(d) = \chi(d)$) followed by the
  induction hypothesis, we then have,
  \begin{align*}
    \abs{\phi(R_v(\sigma)) - \phi(R_v(\tau))}^2 &\leq
    \nu(d_j)\sum_{i=1}^{d_j}\abs{\phi(R_{v_i}(\sigma)) -
      \phi(R_{v_i}(\tau))}^2\text{, using Lemma~\ref{lem:tech}}\\
    &\leq c_1\chi(d_j)\chi(\Delta_{j+1,\ell})^{\ell-j-1}\text{, using
      the induction hypothesis}\\
    &\leq c_1\chi(\Delta_{j,\ell})^{\ell-j}\text{, using eq.~(\ref{eq:21}).}
  \end{align*}
  This completes the induction.
\end{proof}

\begin{proof}[Proof of Theorem~\ref{thm:mixing-spherically-semmetric}]
Recall that $\chi(d)=d\cdot \nu(d)$. We then see from
Lemma~\ref{lem:tau-props} that $\chi(d)$ is increasing in $d$ and
$\chi(\Delta_\rho)<1$ for $\lambda<\lambda_c(\Delta_\rho)$. Note that
$\Delta_{0,\ell}\le\Delta_\rho$ for $\ell$ large enough, and hence the claim in the theorem follows form Lemma~\ref{lem:symm-tree}.
\end{proof}
\begin{corollary}\label{thm:bipartite-graphs}
  Consider the family of bipartite graphs in which the ``left'' side
  has degree at most $(d_1+1)$ and the ``right'' side has degree at
  most $(d_2 + 1)$.  Let $0 < \lambda < \lambda_c(\sqrt{d_1d_2})$.
  Then,
  \begin{itemize}
  \item The hard core model with vertex activity $\lambda$ exhibits
    strong spatial mixing on this family of graphs.
  \item There is an FPTAS for the partition function of the hard core
    model with vertex activity $\lambda$ for graphs in this family.
  \end{itemize}
\end{corollary}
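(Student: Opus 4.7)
The plan is to reduce to the Weitz self-avoiding walk tree and then exploit the bipartite structure by combining the inductive argument of Lemma~\ref{lem:symm-tree} with the log-concavity bound of Lemma~\ref{lem:concavity}. Given a vertex $v$ in a bipartite graph $G$ from the family together with two boundary conditions $\sigma,\tau$ on $G$ that differ only at distance at least $\ell$ from $v$, Weitz's theorem gives $R_v(\sigma,G) = R_v(\sigma,\Ts{v,G})$, so it suffices to prove exponential decay on $T \defeq \Ts{v,G}$. The key structural observation is that every vertex of $T$ corresponds to a self-avoiding walk in $G$ starting at $v$, and by bipartiteness of $G$ its side of the bipartition is determined by the parity of its depth in $T$. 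Hence, if $v$ lies on the side of max degree $d_1 + 1$, the root of $T$ has arity at most $d_1 + 1$, every vertex at odd depth has arity at most $d_2$, and every vertex at even positive depth has arity at most $d_1$ (with the roles of $d_1,d_2$ swapped if $v$ is on the other side).

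Next, I would adapt the induction in the proof of Lemma~\ref{lem:symm-tree} to this alternating (but non-spherically-symmetric) setting. Define $\bar{d}_0 \defeq d_1 + 1$, $\bar{d}_j \defeq d_2$ for odd $j \geq 1$, and $\bar{d}_j \defeq d_1$ for even $j \geq 2$. A direct induction on depth that invokes Lemma~\ref{lem:tech} at the true arity $d_u$ of each vertex $u$ and uses the monotonicity of $\chi(d) = d\cdot\nu(d)$ in $d$ (item~\ref{item:2} of Lemma~\ref{lem:tau-props}) to replace $\chi(d_u)$ by $\chi(\bar{d}_{|u|})$ should yield
\begin{equation*}
  |\phi(R_v(\sigma)) - \phi(R_v(\tau))|^2 \leq M^2 \prod_{j=0}^{\ell-1}\chi(\bar{d}_j),
\end{equation*}
where $M = \phi(\lambda) - \phi(0)$, exactly as in the proof of Lemma~\ref{lem:symm-tree}.

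The decisive step is aggregating the alternating product. Lemma~\ref{lem:concavity} combined with Jensen's inequality, as in eq.~\eqref{eq:20} with weights $1/2,1/2$, gives $\chi(d_1)^{1/2}\chi(d_2)^{1/2} \leq \chi(\sqrt{d_1 d_2})$, and pairing up the $\chi(d_1)$'s and $\chi(d_2)$'s at depths $1,\dots,\ell-1$ together with the root factor $\chi(d_1+1)$ yields a bound of the form $\prod_{j=0}^{\ell-1}\chi(\bar{d}_j) \leq C \cdot \chi(\sqrt{d_1 d_2})^{\ell}$, where $C$ depends only on $d_1,d_2,\lambda$. Since $\lambda < \lambda_c(\sqrt{d_1 d_2})$, item~\ref{item:1} of Lemma~\ref{lem:tau-props} (extended to real $d$ via the fixed-point equation~\eqref{eq:6}) together with the monotonicity in $\lambda$ from item~\ref{item:3} implies $\chi(\sqrt{d_1 d_2}) < 1$, so the right hand side decays exponentially in $\ell$. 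Translating back to $|R_v(\sigma,G) - R_v(\tau,G)|$ via the lower bound on $\phi'$ on $[0,\lambda]$ (as in the proof of Lemma~\ref{lem:symm-tree}) establishes strong spatial mixing.

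The FPTAS then follows by the same technique used in Theorems~\ref{thm:main-1} and~\ref{thm:main-2}: expand $\Ts{v,G}$ to depth $\ell = O(\log n + \log(1/\mu))$, compute the tree recurrence under the two extreme boundary conditions on the truncation, and use the decay estimate to bound the additive error in $R_v$ by $\mu$. The running time is polynomial since the maximum degree is bounded by $\max(d_1,d_2)+1$. The main technical obstacle I anticipate is the clean aggregation in the third step; in particular, one must check that the definition of $\chi$ and the bound of Lemma~\ref{lem:concavity} extend to the potentially non-integer argument $\sqrt{d_1 d_2}$. This should pose no real difficulty, since $\nu_\lambda(d)$ is defined via the fixed-point equation~\eqref{eq:6} which makes sense for all real $d \geq 1$, and Lemma~\ref{lem:concavity} is itself stated for all $x \in [0,\infty)$.
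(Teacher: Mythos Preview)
Your proposal is correct and follows essentially the same route as the paper: reduce to the Weitz SAW tree, exploit the alternating arity structure coming from bipartiteness, and use the log-concavity of $\chi$ (Lemma~\ref{lem:concavity}) together with $\chi(\sqrt{d_1d_2})<1$ when $\lambda<\lambda_c(\sqrt{d_1d_2})$ to get exponential decay; the FPTAS then follows exactly as in Theorems~\ref{thm:main-1} and~\ref{thm:main-2}. The only cosmetic difference is that the paper, instead of re-running the induction of Lemma~\ref{lem:symm-tree} with the monotonicity step $\chi(d_u)\le\chi(\bar d_{|u|})$, embeds $\Ts{v,G}$ into the full spherically symmetric tree $\mathcal{T}$ with arities exactly $\bar d_j$ by declaring the missing vertices ``unoccupied'' and then applies Lemma~\ref{lem:symm-tree} to $\mathcal{T}$ as a black box---this yields the same bound and avoids repeating the induction.
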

Notice that for this class of graphs, the above corollary improves
upon Weitz's result~\cite{Weitz06CountUptoThreshold} (which is valid
only for $\lambda < \lambda_c(\max (d_1, d_2))$), as well as
Theorem~\ref{thm:main-2}: the latter would require $\nu(\max (d_1,
d_2))\sqrt{d_1d_2} \leq 1$, while the requirement in the above
corollary is equivalent to $\nu\inp{\sqrt{d_1d_2}}\sqrt{d_1d_2} \leq
1$.
\begin{proof}
  The proof is similar in structure to the proofs of
  Theorem~\ref{thm:main-1} and \ref{thm:main-2}, except that the
  tighter Lemma~\ref{lem:symm-tree} is used to analyze the decay of
  correlations in place of Lemma~\ref{lem:general-tree} used in those
  proofs.  
  
  We first prove that the hard core model with these parameters
  exhibits strong spatial mixing on this family of graphs. Let $G$ be
  any graph from $\family{F}$, and let $v$ be any vertex in $G$, and consider
  any boundary conditions $\sigma$ and $\tau$ in $G$ which differ only
  at a distance of at least $\ell + 1$ from $v$.  Consider the Weitz
  self-avoiding walk tree~\cite{Weitz06CountUptoThreshold} $\Ts{v, G}$
  rooted at $v$.  Without loss of generality, assume that $v$ lies on
  the ``left'' side of the graph.

  We observe that $\Ts{v,G}$ is a subtree of a spherically symmetric
  tree ${\cal T}$ in which the root has degree $d_1 + 1$, other
  vertices at even distance from the root have arity $d_1$, and
  vertices at odd distance from the root have arity $d_2$.  Notice that
  we may, in fact, view $\Ts{v, G}$ as just the tree ${\cal T}$ by
  adding in the extra boundary condition that the vertices of ${\cal
    T}$ that are not present in $\Ts{v, G}$ are fixed to be
  ``unoccupied''.  This latter modification does not modify any of the
  occupation probabilities.  With a slight abuse of notation, we refer
  to ${\cal T}$ with these boundary conditions as $\Ts{v, G}$.
  Further, as before, we denote the extra boundary conditions on
  $\Ts{v, G}$ corresponding to the boundary condition $\sigma$
  (respectively, $\tau$) on $G$ by the same letter $\sigma$
  (respectively, $\tau$).  From Weitz's theorem, we then known that
  $R_v(\sigma, G) = R_v(\sigma, \Ts{v, G})$ (respectively,
  $R_v(\sigma, G) = R_v(\sigma, \Ts{v, G})$).

  Now, let $C_\ell$ denote the cutset in $\Ts{v, G}$ consisting of all
  vertices at distance $\ell$ from $v$, for some $\ell \geq
  1$. Applying Lemma~\ref{lem:symm-tree} and adjusting for effects
  due to the parity of $\ell$ as well as the possibly higher arity
  (at most $\max\inp{d_1 + 1, d_2 + 1}$) at the root, we then have 
  \begin{equation}
    \abs{R_v(\sigma, G) - R_v(\tau, G)}^2 = \abs{R_v(\sigma, \Ts{v,G})
      - R_v(\tau, \Ts{v,G})}^2 \leq c_2M_0^2\chi(\Delta)^\ell\label{eq:23}
  \end{equation}
  where $M_0 = \max\inp{\chi(d_1+1), \chi(d_2+1)}/\chi(\Delta)$ and
  $\Delta = \sqrt{d_1d_2}$ are constants, and $c_2$ is the constant in
  the statement of Lemma~\ref{lem:symm-tree}.  Recalling that
  $\chi(\Delta) = \Delta\nu(\Delta)$, we see from
  Lemma~\ref{lem:tau-props} that $\chi(\Delta) < 1$ for $\lambda <
  \lambda_c(\Delta)$.  Combining with eq.~(\ref{eq:23}), this
  establishes SSM in the regime $\lambda < \lambda_c(\sqrt{d_1d_2})$
  and hence proves the first part of the corollary.

  The proof of the algorithmic part of the corollary is virtually
  identical to the proof of the algorithmic part of
  Theorem\nobreakspace \ref {thm:main-2}. As before, the depth $\ell$
  up to which the self-avoiding walk tree needs to be explored in
  order to additively approximate $R_v$ up to an error of
  at most $\mu$ is $ O(1) + O(\log (1/\mu)) = O(\log n + \log
  (1/\mu))$, and hence the running time $\Delta^{O(\ell)}$ is still
  polynomial in $n$ and $1/\mu$.    
\end{proof}

\subsection{Branching factor and uniqueness of the Gibbs measure on
  general trees}
\label{sec:branch-numb-uniq}
We close with an application of our results to finding thresholds
for the uniqueness of the Gibbs measure of the hard core model on locally
finite infinite trees.  Our bounds will be stated in terms of the
\emph{branching factor}, which, as indicated above, has been shown to
be the appropriate parameter for establishing phase transition
thresholds for symmetric models such as the ferromagnetic Ising, Potts
and Heisenberg models~\cite{lyons_ising_1989,pemantle_robust_1999}.
We begin with a general definition of the notion of uniqueness of
Gibbs measure (see, for example, the survey article of
Mossel~\cite{mossel_survey:_2004}).  Let $T$ be a locally finite
infinite tree rooted at $\rho$, and let $C$ be a cutset in $T$.
Consider the hard core model with vertex activity $\lambda >0$ on $T$.
We define the \emph{discrepancy} $\delta(C)$ of $C$ as follows.  Let
$\sigma$ and $\tau$ be boundary conditions in $T$ which fix the state
of the vertices on $C$, but not of any vertex in $T_{<C}$.  Then,
$\delta(C)$ is the maximum over all such $\sigma$ and $\tau$ of the
quantity $R_\rho(\sigma, T) - R_\rho(\tau,T)$.
\begin{definition}
  \textbf{(Uniqueness of Gibbs measure).} The hard core model with
  vertex activity $\lambda > 0$ is said to exhibit \emph{uniqueness of
    Gibbs measure} on $T$ if there exists a sequence of cutsets
  $\inp{B_i}_{i=1}^\infty$ such that $\lim_{i\rightarrow\infty}
  d(\rho, B_i) \rightarrow \infty$ and such that
  $\lim_{i\rightarrow\infty}\delta(B_i) = 0$. 
\end{definition}
\begin{remark}
  Our definition of \emph{uniqueness} here is similar in form to those
  used by Lyons~\cite{lyons_ising_1989} and Pemantle and Steif~\cite{pemantle_robust_1999}.  Notice, however, that the recurrence
  for the hard core model implies that the discrepancy is
  ``monotonic'' in the sense that if cutsets $C$ and $D$ are such that
  $C < D$ (i.e., every vertex in $D$ is the descendant of some vertex
  in $C$) then $\delta(C) > \delta(D)$.  This ensures that the choice
  of the sequence $\inp{B_i}_{i=1}^\infty$ in the definition above is
  immaterial.  For example, uniqueness is defined by
  Mossel~\cite{mossel_survey:_2004} in terms of the cutsets $C_\ell$
  consisting of vertices at distance exactly $\ell$ from the root.
  However, the above observation shows that for the hard core model,
  Mossel's definition is equivalent to the one presented here.
\end{remark}

We now define the notion of the branching factor of an infinite tree.
\begin{definition}
  \textbf{(Branching
    factor~\cite{lyons_ising_1989,lyons_random_1990,pemantle_robust_1999}).}
  Let $T$ be an infinite tree.  The branching factor $\br{T}$ is
  defined as follows:
  \begin{equation*}
    \br{T} \defeq \inf\inb{b > 0\left| \inf_C\sum_{v\in C}b^{-|v|} = 0
      \right.},
  \end{equation*}
  where the second infimum is taken over all cutsets $C$.
\end{definition}
To clarify this definition, we consider some examples.  If $T$ is a
$d$-ary tree, then $\br{T} = d$.  Further, by taking the second
infimum over the cutsets $C_\ell$ of vertices at distance $\ell$ from
the root, it is easy to see that the branching factor is never more
than the connective constant.  Further, Lyons~\cite{lyons_random_1990}
observes that in the case of spherically symmetric trees, one can define
the branching factor as
$\liminf_{\ell\rightarrow\infty}N(\rho,\ell)^{\ell}$.

We are now ready to state and prove our results on the uniqueness of
the hard-core model on general trees. 
\begin{theorem}\label{thm:uniq-tree}
  Let $T$ be an infinite tree rooted at $\rho$ with branching factor
  $b$.  The hard core model with vertex activity $\lambda > 0$
  exhibits uniqueness of Gibbs measure on $T$ if at least one of the
  following conditions is satisfied:
 \begin{itemize}
 \item $\lambda < \lambda_c(b + 1)$.
 \item $T$ is a spherically symmetric tree and $\lambda <
   \lambda_c(b)$.
 \end{itemize}
\end{theorem}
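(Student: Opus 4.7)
The plan is to exhibit, for each hypothesis in the theorem, a sequence of cutsets $B_i$ whose distance from $\rho$ tends to infinity and on which the discrepancy $\delta(B_i)\to 0$. All the quantitative machinery I need is already available: Lemma~\ref{lem:general-tree} controls the general case and Lemma~\ref{lem:symm-tree} the spherically symmetric one, both instantiated with the message $\phi(x)=\sinh^{-1}\sqrt{x}$ and exponent $q=2$ from Section~\ref{sec:special-message}, under which $\xi_{\phi,q}(d)=\nu_\lambda(d)$. The only genuinely new step is to translate ``cutsets with small $\sum b^{-|v|}$'', as guaranteed by the branching factor, into the form demanded by these two lemmas.

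For the first part, assume $\lambda<\lambda_c(b+1)$. I would first quote the calculation from the proof of Theorem~\ref{thm:main-1} to obtain $\sup_{d\ge 1}\nu_\lambda(d)\cdot b<1$; fix $\epsilon>0$ with $\nu_\lambda(d)\le(1-\epsilon)/b$ for every $d\ge 1$, set $\alpha\defeq(1-\epsilon)/b$, and let $b'\defeq 1/\alpha=b/(1-\epsilon)>b$. Since $b'$ strictly exceeds the branching factor, and the inner quantity $\inf_C\sum_{v\in C}(b')^{-|v|}$ is monotonically non-increasing in $b'$, the definition yields $\inf_C\sum_{v\in C}(b')^{-|v|}=0$; I would then pick cutsets $C_i$ with $\sum_{v\in C_i}(b')^{-|v|}<1/i$. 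Each individual $v\in C_i$ must then satisfy $|v|\ge\log i/\log b'$, so $d(\rho,C_i)\to\infty$ automatically. Taking $B_i\defeq C_i'$ (a cutset by the remark after the cutset definition) and applying Lemma~\ref{lem:general-tree} to the cutset $C_i$, with $\alpha^{|v|}=(b')^{-|v|}$, gives
\begin{equation*}
 \delta(B_i)^2\;\le\;\frac{c_0\,\nu_\lambda(d_\rho)}{\nu_\lambda(1)}\sum_{v\in C_i}\alpha^{|v|}\;\le\;\frac{c_0\,\nu_\lambda(d_\rho)}{i\,\nu_\lambda(1)}\;\longrightarrow\;0,
\end{equation*}
establishing uniqueness.

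For the second part, with $T$ spherically symmetric, I would use Lyons's alternative characterization $b=\liminf_{\ell\to\infty}N(\rho,\ell)^{1/\ell}$ (noted in the excerpt) to extract a subsequence $\ell_i\to\infty$ along which $\Delta_{0,\ell_i}\to b$. Since $\lambda<\lambda_c(b)$, item~\ref{item:1} of Lemma~\ref{lem:tau-props} gives $\chi(b)=b\,\nu_\lambda(b)<1$, and by continuity together with the monotonicity of $\chi$ there is $\delta>0$ with $\chi(b+\delta)<1$. For $i$ large enough we have $\Delta_{0,\ell_i}\le b+\delta$; then taking $B_i\defeq C_{\ell_i+1}$ and applying Lemma~\ref{lem:symm-tree} with $\ell=\ell_i$ yields $\delta(B_i)^2\le c_2\,\chi(b+\delta)^{\ell_i}\to 0$, while $d(\rho,B_i)=\ell_i+1\to\infty$.

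The one conceptual step that requires care is the mismatch between the two notions of cutset: the branching factor quantifies arbitrary antichain cutsets, whereas Lemma~\ref{lem:general-tree} controls discrepancies only when boundary conditions differ on the \emph{children} of a cutset. My plan resolves this by directly choosing $B_i=C_i'$ in the general case, so the two notions line up; in the spherically symmetric case the canonical cutsets $C_{\ell_i}$ and $C_{\ell_i+1}$ are already aligned with what Lemma~\ref{lem:symm-tree} needs. Beyond this bookkeeping, the monotonicity and continuity of $\nu_\lambda$ and $\chi$ from Lemma~\ref{lem:tau-props} handle all remaining estimates, and no further delicate argument appears necessary.
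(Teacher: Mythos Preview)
Your proposal is correct and follows essentially the same route as the paper: apply Lemma~\ref{lem:general-tree} (with the message of eq.~(\ref{eq:4}) and $q=2$) together with the definition of branching factor for the general case, and Lemma~\ref{lem:symm-tree} together with Lyons's $\liminf$ characterization for the spherically symmetric case. Your bookkeeping with $B_i=C_i'$ (respectively $B_i=C_{\ell_i+1}$) is in fact a shade more careful than the paper's, which glosses over the $C$ versus $C'$ distinction in the statements of those lemmas, but the underlying argument is the same.
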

Notice that the result for spherically symmetric trees is tight.  We conjecture, however, that the general case can also
be improved to $\lambda_c(b)$.  

\begin{proof}[Proof of Theorem~\ref{thm:uniq-tree}]
  We first consider the case of general trees. We will apply
  Lemma~\ref{lem:general-tree} specialized to the message
  in eq.~(\ref{eq:4}).  From the proof of item \ref{item:2} of
  Theorem~\ref{thm:main-1}, we recall that when $\lambda < \lambda_c(b
  +1)$, $\alpha \defeq \sup_{d \geq 1}\nu(d) < \frac{1}{b}$, while
  $\alpha' \defeq \inf_{d\geq 1}\nu(d) = \nu(1)$ is a positive constant.  Suppose
  $\alpha = \frac{1}{b(1+\epsilon)}$ for some $\epsilon > 0$.
  Applying Lemma~\ref{lem:general-tree} to an arbitrary cutset $C$, we
  then get
  \begin{equation}
    \delta(C)^2 \leq M_0\sum_{v \in C}\insq{(1+\epsilon)b}^{-|v|},\label{eq:25}
  \end{equation}
  where $M_0$ is a constant.  Since $b(1+\epsilon) > \br{T}$, the definition of $\br{T}$ implies
  that we can find a sequence $\inp{B_i}_{i=1}^\infty$ of cutsets such
  that
  \begin{equation*}
    \lim_{i\rightarrow\infty}\sum_{v \in
      B_i}\insq{(1+\epsilon)b}^{-|v|} = 0.%
  \end{equation*}
  Further, such a sequence must satisfy
  $\lim_{i\rightarrow\infty}d(\rho, B_i) = \infty$, since otherwise
  the limit above would be positive.  Combining with
  eq.~(\ref{eq:25}), this shows that
  $\lim_{i\rightarrow\infty}\delta(B_i) = 0$, which completes the
  proof of this case.

  For the case of the spherically symmetric tree, we will use Lyons's
  observation~\cite{lyons_random_1990} that for such trees
  \begin{equation}\label{eq:26}
    b = \br{T} = \liminf_{n\rightarrow\infty}N(v,\ell)^{1/\ell}. 
  \end{equation}
  Again, from the fact that $\lambda < \lambda_c(b)$, and the
  properties of the function $\nu$ proved in
  Lemma~\ref{lem:tau-props}, we see that there exists an $\epsilon >
  0$ such that for $b_1 < b(1+\epsilon)$,
  \begin{equation}\label{eq:29}
    \chi(b_1) = b_1\cdot\nu(b_1) \leq 1 - \epsilon.
  \end{equation}
  Further, eq.~(\ref{eq:26}) implies that there exists a
  strictly increasing sequence $\inp{\ell_i}_{i=1}^\infty$, such that the
  cutsets $C_{\ell_i}$ (of vertices at distance exactly $\ell_i$ from the
  root) satisfy
  \begin{equation}\label{eq:28}
    \abs{C_{\ell_i}}^{1/\ell_i} =
    N(\rho,\ell_i)^{1/\ell_i} =
    \Delta_{0, \ell_i} < b(1+\epsilon),
  \end{equation}
  where $\Delta_{0,\ell} = N(\rho,\ell)^{1/\ell}$ is as defined above.
  Applying Lemma~\ref{lem:symm-tree} to a cutset $C_{\ell_i}$, we then
  have 
  \begin{equation*}
    \delta(C_{\ell_i})^2 \leq M_1\chi(\Delta_{0,\ell_i})^{\ell_i},
  \end{equation*}
  where $M_1$ is a constant.  Combining with eqs.~(\ref{eq:29}) and
  (\ref{eq:28}), we then see that
  $\lim_{i\rightarrow\infty}\delta(C_{\ell_i}) = 0$, which completes the
proof.
\end{proof}

  \paragraph{Acknowledgments.} We thank Elchanan Mossel, Allan Sly and
  Dror Weitz for helpful discussions.%

\appendix
  
\section{Monotonicity properties of $\nu_\lambda$}
\label{sec:monot-prop-nu_l}
In this section, we prove Lemma~\ref{lem:tau-props}. 
\begin{proof}[Proof of Lemma~\ref{lem:tau-props}]
  The expression for $\nu_\lambda(d)$ follows by using
  the definitions $\nu_\lambda (d) = \xi(d)=
  \Xi(d, \tilde{x}_\lambda(d))$ and then using eq.\nobreakspace \textup {(\ref {eq:6})}
  from Lemma\nobreakspace \ref {lem:fixed-point}.  To prove the rest of item\nobreakspace \ref {item:1}, we
  observe that when $\lambda = \lambda_c(d) =
  \frac{d^d}{(d-1)^{d+1}}$, $\tilde{x}_{\lambda}(d) = \frac{1}{d-1}$
  is the unique solution to eq.\nobreakspace \textup {(\ref {eq:6})} (indeed, the potential
  function $\phi$ was chosen in \cite{li_correlation_2011} to achieve
  this condition).  Plugging this into the definition of
  $\nu_\lambda$, we see that $\nu_{\lambda}(d) = \frac{1}{d}$, as
  claimed.
  
  We now proceed to prove the other two items.  For item\nobreakspace \ref {item:2}, we
  compute the derivative $\nu_\lambda'(d)$ using the chain rule, and
  show that it is positive (for ease of notation, we denote
  $\tilde{x}_\lambda(d)$ as $\tilde{x}$):
  \begin{align}
    \nu_\lambda'(d) &= \Xi^{(1,0)}(d,\tilde{x}) +
    \Xi^{(0,1)}(d,\tilde{x})\diff{\tilde{x}}{d}\nonumber\\
    &=\Xi^{(1,0)}(d,\tilde{x}), \text{ since $\Xi^{(0,1)}(d,\tilde{x})
      = 0$ by definition of $\tilde{x}$}\nonumber\\
    &=\Xi(d,\tilde{x})\insq{\frac{1}{d} -
      \frac{\log(1+\tilde{x})}{1+f_{d,\lambda}(\tilde{x})}}\nonumber\\
    &=\frac{\Xi(d, \tilde{x})}{d\tilde{x}}\insq{\tilde{x} -
      \log(1+\tilde{x})} > 0, \text{ since $\tilde{x} > 0$ for $d >
      0$.}\label{eq:18}
  \end{align}
  Here, we use $1 + f_{d,\lambda}(\tilde{x}) = d\tilde{x}$ to get the
  last equality. 
  
  To prove item\nobreakspace \ref {item:3}, we notice first that for fixed $d$,
  $\nu_\lambda(d) = \frac{d\tilde{x}_\lambda(d) - 1}{1+
    \tilde{x}_\lambda(d)}$ increases with $\tilde{x}_\lambda(d)$.
  Thus, we only need to establish that $\tilde{x}_\lambda(d)$ is
  increasing in $\lambda$ when $d$ is fixed.  We first observe that
  eq.\nobreakspace \textup {(\ref {eq:6})} implies that
  \begin{equation}
    \lambda = (d\tilde{x}_{\lambda}(d) - 1)(1
    + \tilde{x}_\lambda(d)).\label{eq:8}
\end{equation}
Since $\lambda > 0$, we must have $d\tilde{x}_{\lambda}(d) - 1 \geq
0$, in which case the right hand side of eq.\nobreakspace \textup {(\ref {eq:8})} increases with
$\tilde{x}_\lambda(d)$.  This shows that as $\lambda$ increases, so
must $\tilde{x}_\lambda(d)$. This completes the proof.
\end{proof}

\section{Description of numerical results}
\label{sec:descr-numer-results}
In this section, we describe how the numerical bounds presented in Table~\ref{fig:1}
were obtained.  All of the bounds are direct
applications of Theorem~\ref{thm:main-2} using published upper bounds
on the connective constant for the appropriate graph.  For the Cartesian
lattices $\Z^2, \Z^3, \Z^4, \Z^5$ and $\Z^6$, and the triangular
lattice~$\mathbb{T}$, the exact connective constant is not known, but
rigorous upper and lower bounds are available in the
literature~\cite{madras96:_self_avoid_walk,weisstein:_self_avoid_walk_connec_const}.
In a recent breakthrough, Duminil-Copin and
Smirnov~\cite{duminil-copin_connective_2012} rigorously established
that connective constant of the honeycomb lattice $\mathbb{H}$ is
$\sqrt{2 + \sqrt{2}}$.  In order to apply Theorem~\ref{thm:main-2} for
a given lattice of maximum degree $d+1$ and connective constant
$\Delta$, we choose a given $\lambda$, solve eq.~(\ref{eq:6})
for $\tilde{x}_\lambda(d)$ (this is a polynomial equation of degree
$d+1$, and hence can be solved efficiently for all these lattices
which have small degree), and then compute $\nu_\lambda(d)\Delta$ to
check if it is less than $1$.  The monotonicity of $\nu_\lambda(d)$
in $\lambda$ (Lemma~\ref{lem:tau-props}) then allows us to search
easily for the best possible $\lambda$.  Each of these computations
for a particular lattice takes a few seconds on
\emph{Mathematica}~\cite{wolfram09:_mathem} running on a laptop with a
2.8 GhZ Intel\textsuperscript{\textregistered}
Core\textsuperscript{\texttrademark} 2 Duo CPU, and 4GB of RAM.

As we pointed out in the introduction, any improvement in the
connective constant of a lattice (or that of its corresponding Weitz SAW tree) will immediately lead to an improvement in our bounds.  We
demonstrate this here in the special case of $\Z^2$, by using a tighter
combinatorial analysis of the connective constant for the
Weitz SAW tree of this lattice to obtain a somewhat better
bound than the one presented in Table~\ref{fig:1}.  The Weitz SAW tree adds
additional boundary conditions to the SAW tree of the lattice, and
hence allows a smaller number of self-avoiding walks, and therefore
can have a smaller connective constant than that of the lattice
itself.  Further, the proof of Theorem~\ref{thm:main-2} is only in
terms of the SAW tree, and hence the bounds there clearly hold if the
connective constant of the SAW tree is used in place of the connective
constant of the lattice.  

To upper bound the connective constant of the Weitz SAW tree, we use
the method of \emph{finite memory self-avoiding
  walks}~\cite{madras96:_self_avoid_walk}---these are walks which are
constrained only to not have cycles of length up to some finite length
$L$. Clearly, the number of such walks of any given length $\ell$ upper
bounds $N(v, \ell)$.  In order to bring the boundary conditions on the
Weitz SAW tree into play, we
further enforce the constraint that the walk is not allowed to make
any moves which will land it in a vertex fixed to be ``unoccupied'' by
Weitz's boundary conditions.  Such a walk can be in one of a finite
number $k$ (depending upon $L$) of states, such that the number of
possible moves it can make to state $j$ while respecting the above
constraints is some finite number $M_{ij}$.  The $k\times k$ matrix $M
= (M_{ij})_{i,j\in[k]}$ is
called the \emph{branching
  matrix}~\cite{restrepo11:_improv_mixin_condit_grid_count}. We
therefore get $N(v,\ell) \leq \vec{e_1}^TM^\ell\vec{1}$, where $\vec{1}$
denotes the all $1$'s vector, and $\vec{e_1}$ denotes the co-ordinate
vector for the state of the zero-length walk.

Since the entries of $M$ are non-negative, the Perron-Frobenius
theorem implies that one of the maximum magnitude eigenvalues of
the matrix $M$ is a positive real number $\gamma$.  Using the Jordan
canonical form, one then sees that
\begin{displaymath}
  \limsup_{\ell\rightarrow\infty} N(v,\ell)^{1/\ell} \leq 
  \limsup_{\ell\rightarrow\infty}(\vec{e_1}^TM^\ell\vec{1})^{1/\ell}
  \leq \max\inp{\gamma,1}.
\end{displaymath}
Hence, the largest real eigenvalue $\gamma$ of $M$ gives a bound on
the connective constant of the Weitz SAW tree.  Using the matrix $M$
corresponding to walks in which cycles of length at most $L=14$ are
avoided, we get that the connective constant of the Weitz SAW tree is
at most $2.5384$.  Using this bound, and applying
Theorem~\ref{thm:main-2} as described above, we get the bounds $2.614$
and $2.185$ for $\Delta$ and $\lambda$ respectively, in the notation
of the table.

\section{Proofs omitted from Section~\ref{sec:messages-tree}}
\label{sec:proof-lemma-refl}

\begin{proof}[Proof of Lemma~\ref{lem:mean-value}]
  Define $F(t) = f_{d,\lambda}^\phi(t\vec{x} + (1-t)\vec{y})$ for $t \in
  [0,1]$.  By the scalar mean value theorem applied to $F$, we have 
  \begin{equation*}
    f_{d,\lambda}^\phi(\vec{x}) - f_{d,\lambda}^\phi(\vec{y}) = F(1) -
    F(0) = F'(s)\text{, for some $s \in [0,1]$}.
  \end{equation*}
  Let $\psi$ denote the inverse of the message $\phi$: the derivative
  of $\psi$ is given by $\psi'(y) = \frac{1}{\Phi(\psi(y))}$,
  where $\Phi$ is the derivative of $\phi$.  We now define the vector
  $\vec{z}$ by setting $z_i = \psi(sx_i + (1-s)y_i)$ for $1 \leq i
  \leq d$.  We then have
  \begin{align}
    \abs{f_{d,\lambda}^\phi(\vec{x}) - f_{d,\lambda}^\phi(\vec{y})} &=
    \abs{F'(s)} = \abs{\ina{\nabla{f_{\lambda, d}^\phi(s\vec{x} +
        (1-s)\vec{y})},
        \vec{x} - \vec{y}}}\nonumber\\
    &= \Phi(f_{d,\lambda}(\vec{z}))\abs{\sum_{i=1}^d\frac{x_i -
        y_i}{\Phi(z_i)}\pdiff{f_{d,\lambda}}{z_i}}, \quad\text{using the chain rule}\nonumber\\
    &\leq
    \Phi\inp{f_{d,\lambda}(\vec{z})}\sum_{i=1}^d\frac{\abs{y_i-x_i}}
    {\Phi(z_i)}\abs{\pdiff{f_{d,\lambda}}{z_i}}, \quad\text{as claimed.}\nonumber
  \end{align}
  We recall that for simplicity, we are using here the somewhat
  non-standard notation $\pdiff{f}{z_i}$ for the value of the partial
  derivative $\pdiff{f}{R_i}$ at the point $\vec{R} = \vec{z}$.
\end{proof}

\begin{proof}[Proof of Lemma~\ref{lem:general-tree}]
  We first set up some notation. For a vertex $v$ in $T_{\leq C'}$, we
  will denote by $T_v$ the subtree rooted at $v$ and containing all
  the descendants of $v$.  By a slight abuse of notation, we denote by
  $R_v(\sigma)$ (respectively, $R_v(\tau)$) the occupation probability
  $R_v(\sigma, T_v)$ (respectively, $R_v(\tau, T_v))$ of the vertex
  $v$ in the subtree $T_v$ under the boundary condition $\sigma$
  (respectively, $\tau)$ restricted to $T_v$.  Further, we will denote
  by $C_v$ (respectively $C'_v$) the restriction of the cutset $C$
  (respectively, $C'$) to $T_v$.  We also define the following two
  quantities related to the message $\phi$:
  \begin{displaymath}
    L \defeq \min_{x \in \insq{0,\lambda}}\phi'(x)\text{, and } M
    \defeq \phi(\lambda) - \phi(0)
  \end{displaymath}
  Notice that both these quantities are finite and positive because of
  the constraints in the definition of a message. 

  By induction on the structure of $T_\rho$, we will now show that for
  any vertex $v$ in $T_\rho$ which is at a distance $\delta_v$ from
  $\rho$, and has arity  $d_v$, one has
  \begin{equation}\label{eq:7}
    |\phi(R_v(\sigma)) - \phi(R_v(\tau))|^q \leq \frac{c_1\xi(d_v)}{\alpha'}
      \sum_{u \in C_v} \alpha^{|u|-\delta_v},
  \end{equation}
  where $c_1 = M^q$. To get the claim of the lemma, we notice
  that since $\rho \not\in C$, the form of the recurrence for the hard
  core model implies that both $R_\rho(\sigma)$ an $R_\rho(\tau)$ are
  in the interval $[0,\lambda]$.  It then follows that $|R_v(\sigma) -
  R_v(\tau)| \leq \frac{1}{L}|\phi(R_v(\sigma)) - \phi(R_v(\tau))|$.
  Hence, taking $v = \rho$ in eq.~(\ref{eq:7}) and then setting $c_0 =
  c_1/L^q$, the claim of the lemma follows.

  We now proceed to prove eq.~(\ref{eq:7}). The base case of the
  induction consists of vertices $v$ which are either of arity $0$ or
  which are in $C$.  In the first case (which includes the case where
  $v$ is fixed by the boundary condition), we clearly have
  $R_v(\sigma) = R_v(\tau)$, and hence the claim is trivially true.
  In the second case, we have $C_v = \inb{v}$, and all the children of
  $v$ must lie in $C'$. The form of the recurrence of the hard core
  model then implies that both $R_v(\sigma)$ and $R_v(\tau)$ lie in
  the interval $[0,\lambda]$, so that we have 
  \begin{displaymath}
    \abs{\phi(R_v(\sigma)) - \phi(R_v(\tau))}^q \leq (\phi(\lambda) -
    \phi(0))^q = M^q \leq
    \frac{M^q\xi(d_v)}{\alpha'}\text{, since $\alpha' \leq \xi(d_v)$.}
  \end{displaymath}

  We now proceed to the inductive case.  Let $v_1, v_2, \ldots
  v_{d_v}$ be the children of $v$, which satisfy eq.~(\ref{eq:7}) by
  induction.  Applying Lemma~\ref{lem:tech} followed by the induction
  hypothesis, we then have,
  \begin{align*}
    \abs{\phi(R_v(\sigma)) - \phi(R_v(\tau))}^q &\leq
    \xi(d_v)\sum_{i=1}^{d_v}\abs{\phi(R_{v_i}(\sigma)) -
      \phi(R_{v_i}(\tau))}^q\text{, using Lemma~\ref{lem:tech}}\\
    &\leq
    \frac{c_1\xi\inp{d_v}}{\alpha'}\sum_{i=1}^{d_v}\xi\inp{d_{v_i}}\sum_{u
      \in C_{v_i}}\alpha^{|u| - \delta_{v_i}}\text{, using the
      induction hypothesis}\\
    &\leq\frac{c_1\xi\inp{d_v}}{\alpha'}\sum_{u \in
      C_v}\alpha^{|u|-\delta_v}\text{, using $\xi(d_{v_i}) \leq
      \alpha$ and $\delta_{v_i} = \delta_{v} + 1$}.
  \end{align*}
  This completes the induction.  
\end{proof}

\section{Proof of Lemma~\ref{lem:concavity}}
\label{sec:proof-lemma-refl-1}
\begin{proof}[Proof of Lemma~\ref{lem:concavity}]
  Define the function $L$ as
  \begin{displaymath}
    L(x) \defeq \frac{\nu'(x)}{\nu(x)}.
  \end{displaymath}
  Notice that by item \ref{item:2} in Lemma~\ref{lem:tau-props}, $L(x)
  \geq 0$ for $x \geq 1$. In particular, $L(e^z) \geq 0$ for $z \geq
  0$. We can now write the second derivative of $H$ as follows:
  \begin{equation*}%
    H''(x) = e^zL(e^x)\inp{1 + e^x\frac{L'(e^x)}{L(e^x)}}.
  \end{equation*}
  Since $L(e^z) \geq 0$ for non-negative $z$ as observed above, in order to show that $H$ is concave for $x \geq 0$, we
  only need to show that 
  \begin{equation}\label{eq:17}
    d\frac{L'(d)}{L(d)} \leq -1\text{ for all $d \geq 1$}
  \end{equation}
   We now proceed
  to analyze the function $L$.  We recall that by the definition of
  $\nu$ following Lemma~\ref{lem:fixed-point}, we have $\nu(d) =
  \Xi(d, \tilde{x}(d))$, where $\Xi$ is as defined in that lemma and
  $\tilde{x}(d)$ is the unique non-negative solution of
  $\Xi^{(0,1)}(d, x) = 0$.  For ease of notation, we denote
  $\tilde{x}(d)$ by $\tilde{x}$ in what follows.  We now have
  \begin{align*}
    L(d) &= \frac{\nu'(d)}{\nu(d)} =
    \frac{\nu'(d)}{\Xi(d,\tilde{x})}\\
    &=\frac{1}{d\tilde{x}}\insq{\tilde{x} - \log(1+\tilde{x})}\text{,
      using the derivation of eq.~(\ref{eq:18}) above.}
  \end{align*}
  Before proceeding further, we evaluate $\tilde{x}' \defeq
  \diff{\tilde{x}}{d}$.  Since $d\tilde{x} = 1 + f_d(\tilde{x})$,
  differentiation gives
  \begin{displaymath}
    \tilde{x} + d\tilde{x}' = 
    -f_d(\tilde{x})\insq{\frac{d\tilde{x}'}{1+\tilde{x}}  + \log(1+\tilde{x})}
  \end{displaymath}
  which in turn yields (using $d\tilde{x} = 1 + f_d(\tilde{x})$)
  \begin{align*}
    \tilde{x}' &= -\frac{(1+\tilde{x})
      \insq{f_d(\tilde{x})\log(1+\tilde{x}) +
        \tilde{x}}}{d(1+d)\tilde{x}}.%
  \end{align*}
  Since $\tilde{x} \geq 0$, this shows that $\tilde{x}' \leq 0$.  We
  now differentiate $L(d)$ to get
  \begin{align*}
    \frac{L'(d)}{L(d)} &= -\frac{1}{d} +
    \tilde{x}'\insq{
      \frac{(1+\tilde{x})\log(1+\tilde{x})-\tilde{x}}
      {\tilde{x}(1+\tilde{x})(\tilde{x}-\log(1+\tilde{x}))}}.
  \end{align*}
  Since $\tilde{x} \geq 0$, we have both
  $(\tilde{x}-\log(1+\tilde{x})) \geq 0$ and
  $(1+\tilde{x})\log(1+\tilde{x})-\tilde{x} \geq 0$.  Combining this
  with the observation above that $\tilde{x}' \leq 0$, this shows that
  $d\frac{L'(d)}{L(d)} \leq -1$ for $d \geq 1$.  As observed in the
  discussion following eq.~(\ref{eq:17}), this implies that $H$ is
  concave for $x \geq 0$.  
\end{proof}

\end{document}